\def\etal{\textit{et~al.}}
\def\polylog{\mathop{\mathrm{polylog}}}
\def\eps{\varepsilon}
\def\reals{\mathbb{R}}
\def\ceil#1{\lceil #1 \rceil}
\def\seq#1{\langle #1 \rangle}
\def\set#1{\{ #1 \}}
\def\abs#1{\mathopen| #1 \mathclose|}		
\def\norm#1{\mathopen\| #1 \mathclose\|}	
\def\Set#1{\left\{ #1 \right\}}
\def\Paren#1{\left( #1 \right)}		
\def\Vor{\text{Vor}}
\long\def\@makecaption#1#2{
    \vskip 10pt
    \setbox\@tempboxa\hbox{{\footnotesize {\bf #1.} #2}}
    \ifdim \wd\@tempboxa >\hsize         
        {\footnotesize {\bf #1.} #2\par}
      \else                              
        \hbox to\hsize{\hfil\box\@tempboxa\hfil}
    \fi}
\def\EMPH#1{\textbf{\boldmath #1}}
\def\n@te#1{\textsf{\boldmath \textbf{$\langle\!\langle$#1$\rangle\!\rangle$}}\leavevmode}
\def\note#1{\textcolor{red}{\n@te{#1}}}
\def\cost{\$}
\def\dist{\delta}
\def\nn{\operatorname{nn}}
\def\Center{\operatorname{center}}
\newtheorem{lemma}{Lemma}[section]
\newtheorem{theorem}[lemma]{Theorem}
\newtheorem{remark}[lemma]{Remark}
\numberwithin{figure}{section}
\begin{document}
\begin{titlepage}

\title{Clustering under Perturbation Stability in Near-Linear Time}%

\thanks{Pankaj Agarwal has been partially supported by NSF grants IIS-18-14493 and CCF-20-07556. Kamesh Munagala is supported by NSF grants CCF-1637397 and IIS- 1447554; ONR award N00014-19-1-2268; and DARPA award FA8650-18-C-7880.}

\author{
    Pankaj K.\ Agarwal\thanks{Department of Computer Science, Duke University, USA.}
    \qquad
    Hsien-Chih Chang\thanks{Department of Computer Science, Dartmouth, USA.}
		\qquad
		Kamesh Munagala\footnotemark[2]
		\medskip \\
		Erin Taylor\footnotemark[2]
    \qquad
    Emo Welzl\thanks{Department of Computer Science, ETH Zürich, Switzerland.}
}

\date{September 28, 2020}

\maketitle

\begin{abstract}
We consider the problem of center-based clustering in low-dimensional Euclidean spaces under the perturbation stability assumption.
An instance is $\alpha$-stable if the underlying optimal clustering continues to remain optimal even when all pairwise distances are arbitrarily perturbed by a factor of at most $\alpha$.
Our main contribution is in presenting efficient exact algorithms for $\alpha$-stable clustering instances whose running times depend near-linearly on the size of the data set when $\alpha \ge 2 + \sqrt{3}$.
For $k$-center and $k$-means problems, our algorithms also achieve polynomial dependence on the number of clusters, $k$, when $\alpha \ge 2 + \sqrt{3} + \eps$ for any constant $\eps > 0$ in any fixed dimension. For $k$-median, our algorithms have polynomial dependence on $k$ for $\alpha > 5$ in any fixed dimension; and for $\alpha \ge 2 + \sqrt{3}$ in two dimensions.
Our algorithms are simple, and only require applying techniques such as local search or dynamic programming to a suitably modified metric space, combined with careful choice of data structures.
\end{abstract}


\setcounter{page}{0}
\thispagestyle{empty}
\global\let\newpagegood\newpage
\global\let\newpage\relax
\end{titlepage}
\global\let\newpage\newpagegood
\pagestyle{myheadings}
\markboth{P.\ K.\ Agarwal, H.-C. Chang, K.\ Munagala, E.\ Taylor, and E.\ Welzl}
		{Clustering under Perturbation Stability in Near-Linear Time}



\section{Introduction}
Clustering is a fundamental problem in unsupervised learning and data summarization, with wide-ranging applications that span myriad areas.
Typically, the data points are assumed to lie in a Euclidean space, and the goal in center-based clustering is to open a set of $k$ centers to minimize the objective cost, usually a function over the distance from each data point to its closest center.
The $k$-median objective minimizes the sum of distances; the $k$-means minimizes the sum of squares of distances; and the $k$-center minimizes the longest distance.
In the worst case, all these objectives are NP-hard even in 2D~\cite{ms-ccglp-1984,mnv-pkpnh-2012}.

A substantial body of work has focused on developing polynomial-time approximation algorithms and analyzing natural heuristics for these problems.
Given the sheer size of modern data sets, such as those generated in genomics or mapping applications, even a polynomial-time algorithm is too slow to be useful in practice---just computing all pairs of distances can be computationally burdensome.
What we need is an algorithm whose running time is near-linear in the input size and polynomial in the number of clusters.

Because of NP-hardness results, we cannot hope to compute an optimal solution in polynomial time, but in the worst case an approximate clustering can be different from an optimal clustering.
We focus on the case when the optimal clustering can be recovered under some reasonable assumptions on the input that hold in practice.
Such methodology is termed ``beyond worst-case analysis'' and has been adopted by recent work~\cite{angel2017local,afshani2017instance,bilu2012stable}.
In recent years, the notion of \emph{stability} has emerged as a popular assumption under which polynomial-time optimal clustering algorithms have been developed.
An instance of clustering is called \emph{stable} if any ``small perturbation''  of input points does not change the optimal solution.
This is natural in real datasets, where often, the optimal clustering is clearly demarcated, and the distances are obtained heuristically.
Different notions of stability differ in how ``small perturbation'' is defined, though most of them are related.
In this paper, we focus on the notions of stability introduced in
Bilu and Linial~\cite{bilu2012stable}
and Awasthi, Blum, and Sheffet~\cite{abs-cbcps-2012}.
A clustering instance is \emph{$\alpha$-perturbation resilient} or \emph{$\alpha$-stable} if the optimal clustering does not change when all distances are perturbed by a factor of at most $\alpha$.
Similarly, a clustering instance is \emph{$\alpha$-center proximal} if any point is at least a factor of $\alpha$ closer to its own optimal center than any other optimal center.
Awasthi, Blum, and Sheffet showed that $\alpha$-stability implies $\alpha$-center proximity~\cite{abs-cbcps-2012}.
This line of work designs algorithms to recover the \emph{exact} optimal clustering---the ground truth---in polynomial time for $\alpha$-stable instances.

This paper also focuses on recovering the optimal clustering for stable clustering instances.
But instead of focusing on polynomial-time algorithms and optimizing the value of $\alpha$, we ask the question:
\emph{Can algorithms be designed that compute exact solutions to stable instances of Euclidean center-based clustering that run in time near-linear in the input size?}
We note that an $(1+\e)$-approximation solution, for an arbitrarily small constant $\e > 0$, may differ significantly from an optimal solution (the ground truth) even for stable instances, so one cannot hope to use an approximation algorithm to recover the optimal clustering.

\subsection{Our Results}
In this paper, we make progress on the above question, and present near-linear time algorithms for finding optimal solutions of stable clustering instances with moderate values of $\alpha$.
In particular, we show the following meta-theorem:


\begin{theorem}
    \label{Th:meta}
Let $X$ be a set of $n$ points in $\reals^d$ for some constant $d$, let $k \geq 1$ be an integer, and let $\alpha \geq 2 + \sqrt{3}$ be a parameter. If the $k$-median, $k$-means, or $k$-center clustering instance for $X$ is $\alpha$-stable, then the optimal solution can be computed in $\tilde{O}(n \poly k + f(k))$ time.
\end{theorem}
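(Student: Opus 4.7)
The plan is to unify the three objectives by leveraging the $\alpha$-center proximity property that Awasthi, Blum, and Sheffet derive from $\alpha$-stability: for every point $p$ in the optimal cluster $C_i$ with center $c_i^*$ and every other optimal center $c_j^*$, one has $\alpha \cdot \dist(p,c_i^*) < \dist(p,c_j^*)$. First I would establish a purely geometric consequence: when $\alpha \geq 2+\sqrt{3}$, the optimal clusters are so well separated that every point is closer to its own center than to any point in a different cluster, and the optimal centers can be localized to small regions around seeds returned by, say, Gonzalez's farthest-first traversal in $\tilde O(n \poly k)$ time. The threshold $2+\sqrt{3}$ is exactly where the two-ball intersection inequalities become tight; I would do this geometric calculation once and instantiate it for each of $k$-center, $k$-median, and $k$-means.

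Next I would reduce to a surrogate instance supported on $\poly(k)$ candidate centers. Using a compressed quadtree or a well-separated pair decomposition of $X$, one can build in $\tilde O(n \poly k)$ time a candidate set $C$ of size $\poly(k)$ with the guarantee that some $k$-subset of $C$ coincides with, or is geometrically indistinguishable from, the true optimal centers. On $C$ I would then run the objective-specific exact procedure: for $k$-center a threshold test across the $O(k^2)$ candidate radii, and for $k$-median and $k$-means either dynamic programming over $C$ or the local-search algorithm of Balcan--Liang type, which is known to recover the optimum under $\alpha$-center proximity with $\alpha \geq 2+\sqrt{3}$. This step costs $f(k)$ time, independent of $n$. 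Finally, the chosen centers are broadcast back to all of $X$ by a single batch of nearest-neighbor queries in $\tilde O(n \poly k)$ time, and the separation from step one guarantees that this assignment is the optimal clustering.

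The crux of the argument, and the main obstacle I expect, is ensuring that the surrogate of step two preserves the optimum \emph{exactly} at the threshold $\alpha = 2+\sqrt{3}$, rather than requiring some strictly larger $\alpha$. Any coreset- or net-style replacement inflates effective distances by a small multiplicative factor, which erodes the perceived center-proximity ratio; I must keep the erosion strictly below one. This forces a delicate quantitative accounting, and it is particularly tight for $k$-median, whose linear cost function provides less slack than the squared $k$-means objective or the $\max$-based $k$-center objective. The remaining work is routine: selecting Euclidean data structures (compressed quadtrees, approximate nearest-neighbor searching, WSPDs) to keep the near-linear $n$-dependence sharp, and doing separate bookkeeping for the three objectives since each requires a slightly different reduction from the common geometric template of step one.
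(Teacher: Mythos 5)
Your proposal outlines a genuinely different strategy from the paper's, but it has several gaps that prevent it from being a complete proof.

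The paper establishes the theorem via three distinct algorithms: (i) a dynamic program over the recursion tree generated by repeatedly removing the longest edge of the minimum spanning tree --- the crucial structural lemma is that for $\alpha \geq 2+\sqrt{3}$ the endpoints of the longest MST edge lie in different optimal clusters and no cluster straddles the cut; (ii) single-swap local search for $k$-median (which requires $\alpha > 5$, not $2+\sqrt{3}$); and (iii) a multiplicative-coreset approach whose coreset has size $O(k!)$, giving an $f(k)$ that is exponential in $k$. The paper's unifying technical device is Lemma~\ref{L:change-metric}: replacing the Euclidean metric by a polyhedral metric that $\beta$-approximates it, with $\beta < \alpha$, preserves the unique optimal clustering and degrades stability only to $\alpha/\beta$. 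This is what makes the geometric data structures (range trees, trapezoidal decompositions, $O(1)$-time 1-cluster cost updates) go through.

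The concrete gaps in your proposal are the following. First, you assert that one can build in $\tilde O(n \poly k)$ time a candidate center set $C$ of size $\poly(k)$ such that some $k$-subset is the optimum (or indistinguishable from it). No known construction achieves this; the paper's own coreset construction gives size $O(k!)$ and then must enumerate $\binom{|Q|}{k}$ subsets, so its $f(k)$ is roughly $(k!)^k$ --- exponential, not polynomial. Since the theorem statement permits exponential $f(k)$ this is not fatal to the theorem as stated, but it undermines the second half of your plan, which relies on $|C| = \poly(k)$ to get the claimed bounds. Second, you invoke ``dynamic programming over $C$'' but give no DP structure; the paper's DP works because the MST-longest-edge property yields a binary recursion tree over subsets of $X$ of total size $O(n\log n)$, an idea absent from your outline. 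Third, the claim that a Balcan--Liang-style local search recovers the optimum under $\alpha$-center proximity with $\alpha \geq 2+\sqrt{3}$ is unsupported: the best known single-swap local-search analysis in this paper requires $\alpha > 5$ for $k$-median, and prior local-search analyses for stable instances need either larger $\alpha$ or multi-swaps. Fourth, you correctly flag that any net- or coreset-style replacement erodes the center-proximity ratio, but you do not resolve the issue; the paper's resolution is Lemma~\ref{L:change-metric} combined with either working in $d=2$ with the exact $L_1$ metric (no erosion) or accepting $\alpha \geq 2+\sqrt{3}+\eps$ in higher dimensions. Without a concrete mechanism to absorb this erosion, the proposal does not close at the stated threshold.
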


In the above theorem, the $\tilde{O}$ notation suppresses logarithmic terms in $n$ and the spread of the point set. The function $f(k)$ depends on the choice of algorithm, and we present the exact dependence below.
We also omit terms depending solely on the dimension, $d$.
Furthermore, the above theorem is robust in the sense that the algorithm is not restricted to choosing the input points as centers (\emph{discrete setting}), and can potentially choose arbitrary points in the Euclidean plane as centers (\emph{continuous setting}, sometimes referred to as the \emph{Steiner point setting})---indeed, we show that these notions are identical under a reasonable assumption on stability. 

At a more fine-grained level, we present several algorithms that require mild assumptions on the stability condition. 
In the results below, as well as throughout the paper, we present our results both for the Euclidean plane, as well as generalizations to higher (but fixed number of) dimensions.



\begin{description}
\item[Dynamic Programming.] In Section~\ref{S:dp}, we present a dynamic programming algorithm that computes the optimal clustering in $O(nk^2 + n \polylog n)$ time for $\alpha$-stable $k$-means, $k$-median, and $k$-center in any fixed dimension,
provided that $\alpha \ge 2 + \sqrt{3} + \eps$ for any constant $\eps > 0$.  For $d = 2$, it suffices to assume that $\alpha \geq 2 + \sqrt{3}$.
\item[Local Search.] In Sections~\ref{S:local-search} and~\ref{S:efficient-cost}, we show that the standard $1$-swap local-search algorithm, which iteratively swaps out a center in the current solution for a new center as long as the resulting total cost improves, computes an optimal clustering for $\alpha$-stable instances of $k$-median assuming $\alpha > 5$.  We also show that it can be implemented in
$O(nk^2 \log^3 n\log \Delta)$ for $d = 2$ and in $O(n k^{2d - 1} \polylog n \log \Delta)$ for $d > 2$; $\Delta$ is the spread of the point set.\footnote{The \emph{spread} of a point set is the ratio between the longest and shortest pairwise distances.}
\item[Coresets.] In the Section~\ref{S:coresets}, we use multiplicative coresets to compute the optimal clustering for $k$-means, $k$-median and $k$-center in any fixed dimension, when  $\alpha \ge 2+\sqrt{3}$.
The running time is $O(nk^2 + f(k))$ where $f(k)$ is an exponential function of $k$.
\end{description}

\begin{remark} While the current analysis of the dynamic programming based algorithm suggests that it is better than the local-search and coreset based approaches, the latter are of independent interest---our local-search analysis is considerably simpler than the previous analysis~\cite{fks-ealbs-2019}, and coresets have mostly been used to compute approximate, rather than exact, solutions.
We also note that our analysis of the local-search algorithm is probably not tight. Furthermore, variants of all three approaches might work for smaller values of $\alpha$.
We note that the value of $\alpha$ assumed in the above results in larger than what is known for polynomial-time algorithm (e.g.\ $\alpha \ge 2$ in Angelidakis \etal~\cite{amm-aspp-2017}) and that in some applications the input may not satisfy our assumption, but our results are a big first step toward developing near-linear time algorithms for stable instances.
We are not aware of any previous near-linear time algorithms for computing optimal clustering even for larger values of $\alpha$.
We leave the problem of reducing the assumption on $\alpha$ as an important open question.
\end{remark}

\subparagraph*{Techniques.} The key difficulty with developing fast algorithms for computing the optimal clustering is that some clusters could have a very small size compared to others. This issue persists even when the instances are stable.
Imagine a scenario where there are multiple small clusters, and an algorithm must decide whether to merge these into one cluster while splitting some large cluster, or keep them intact.
Now imagine this situation happening recursively, so that the algorithm has multiple choices about which clusters to recursively split.
The difference in cost between these options and the size of the small clusters can be small enough that any $(1+\e)$-approximation can be agnostic, while an exact solution cannot.
As such, work on finding exact optima use techniques such as dynamic programming~\cite{amm-aspp-2017} or local search with large number of swaps~\cite{coh-faslk-2018,fks-ealbs-2019} in order to recover small clusters.
Other work makes assumptions lower-bounding the size of the optimal clusters or the spread of their centers~\cite{dlv-ekmca-2019}.

Our main technical insight for the first two results is simple in hindsight, yet powerful: For a stable instance, if the Euclidean metric is replaced by another metric that is a good approximation, then the optimal clustering does not change under the new metric and in fact the instance remains stable albeit with a smaller stability parameter.
In particular,
%
we replace the Euclidean metric with an appropriate \emph{polyhedral metric}---that is, a convex distance function where each unit ball is a regular polyhedron---yielding efficient procedures for the following two primitives:

\begin{itemize}
\item \textbf{Cost of $\mathbf{1}$-swap.}  Given a candidate set of centers $S$, maintain a data structure that efficiently updates the total cost if center $x \in S$ is replaced by center $y \notin S$.
\item \textbf{Cost of $\mathbf{1}$-clustering.}  Given a partition of the data points, maintain a data structure where the cost of $1$-clustering (under any objectives) can be efficiently updated as partitions are merged.
\end{itemize}
%
%

We next combine the insight of changing the metrics with additional techniques.
For local search, we build on the approach in~\cite{cs-lssci-2017,coh-faslk-2018,fks-ealbs-2019} that shows local search with $t$-swaps for large enough constant $t$ finds an optimal solution for stable instances in polynomial time for any fixed-dimension Euclidean space.
None of the prior analysis directly extends as is to $1$-swap, which is critical in achieving near-linear running time---note that even when $t = 2$ there is a quadratic number of candidate swaps per step.

For the dynamic programming algorithm, we use the following insight:  In Euclidean spaces, for $\alpha \ge 2 + \sqrt{3}$, the longest edge of the minimum spanning tree over the input points partitions the data set in two, such that any optimal cluster lies completely in one of the two sides of the partition. Combined with the change of metrics one can achieve near-linear running time.


%

%

\subsection{Related Work}
\label{SS:related-work}
All of $k$-median, $k$-means, and $k$-center are widely studied from the perspective of approximation algorithms and are known to be hard to approximate~\cite{feder1988optimal}.
Indeed, for general metric spaces,  $k$-center is hard to approximate to within a factor of $2 - \e$~\cite{HochbaumS}; $k$-median is hard to $(1+2/e)$-approximate~\cite{jms-ngafl-2002}; and $k$-means is hard to $(1+8/e)$-approximate in general metrics~\cite{cohenaddad_et_al2019}, and is hard to approximate within a factor of $1.0013$ in the Euclidean setting~\cite{lee2017improved}.
Even when the metric space is Euclidean, $k$-means is still NP-hard when $k=2$~\cite{dasgupta2008hardness,adhp-nhess-2009}, and there is an $n^{\Omega(k)}$ lower bound on running time for $k$-median and $k$-means in $4$-dimensional Euclidean space under the exponential-time hypothesis~\cite{cmrr-blc-2018}.

There is a long line of work in developing $(1+\e)$-approximations for these problems in Euclidean spaces.
The holy grail of this work has been the development of algorithms that are near-linear time in $n$, and several techniques are now known to achieve this.
This includes randomly shifted quad-trees~\cite{arora1998approximation}, coresets~\cite{agarwal2005geometric,har2004no,har2004coresets,feldman2007ptas,badoiu2002approximate}, 
sampling~\cite{kumar2004simple}, and local search~\cite{coh-faslk-2018,ckm-lsyas-2019,cfs-ntasc-2019}, 
among others.

There are many notions of clustering stability that have been considered in literature~\cite{ostrovsky,bbv-dfcsf-2008,bbg-aca-2009,ab-cts-2009,br-dsccl-2014,akb-cscq-2016,dvw-csiek-2017,abjk-acscq-2018,kumar}.
The exact definition of stability we study here was first introduced in
Awasthi \etal~\cite{abs-cbcps-2012}; their definition in particular resembles the one of
Bilu and Linial~\cite{bilu2012stable} for max-cut problem, which later has been adapted to other optimization problems~\cite{mmssw-cmtsc-2011,mmv-blsim-2014,amm-aspp-2017,bb-nepsg-2017,aabcd-blsca-2018}.
Building on a long line of work~\cite{abs-cbcps-2012,bl-cpr-2016,bhw-kccpr-2016,bc-ptasc-2016}, which gradually reduced the stability parameter, Angelidakis \etal~\cite{amm-aspp-2017} present a dynamic programming based polynomial-time optimal algorithm for discrete $2$-stable instances for all center-based objectives.


Chekuri and Gupta~\cite{cg-prckc-2018} show that a natural LP-relaxation is integral for the $2$-stable $k$-center problem.
Recent work by Cohen-Addad~\cite{cs-lssci-2017} provides a framework for analyzing local search algorithms for stable instances.
This work shows that for an $\alpha$-stable instance with $\alpha > 3$, any solution is optimal if it cannot be improved by swapping $\ceil{2/(\alpha - 3)}$ centers.
%
Focusing on Euclidean spaces of fixed dimensions, Friggstad \etal~\cite{fks-ealbs-2019} show that a local-search algorithm with $O(1)$-swaps runs in polynomial time under a $(1+\delta)$-stable assumption for any $\delta > 0$.
However, none of the algorithms for stable instances of clustering so far have running time near-linear in $n$, even when the stability parameter $\alpha$ is large, points lie in $\reals^2$, and the underlying metric is Euclidean.

On the hardness side, solving $(3 - \delta)$-center proximal $k$-median instances in general metric spaces is NP-hard for any $\delta>0$~\cite{abs-cbcps-2012}.
When restricted to Euclidean spaces in arbitrary dimensions, Ben-David and Reyzin~\cite{br-dsccl-2014} showed that for every $\delta > 0$, solving discrete $(2 - \delta)$-center proximal $k$-median instances is NP-hard.
Similarly, the clustering problem for discrete $k$-center remains hard for $\alpha$-stable instances when $\alpha < 2$, assuming standard complexity assumption that NP $\neq$ RP~\cite{bhw-kccpr-2016}.
Under the same complexity assumption, discrete $\alpha$-stable $k$-means is also hard when $\alpha < 1+\delta_0$ for some positive constant~$\delta_0$~\cite{fks-ealbs-2019}.
Deshpande \etal~\cite{dlv-ekmca-2019} showed it is NP-hard to $(1+\e)$-approximate $(2-\delta)$-center proximal $k$-means instances.

\section{Definitions and Preliminaries}
\label{S:prelim}

\subparagraph*{Clustering.}
Let $X = \{ p_1, \dots , p_n\}$ be a set of $n$ points in $\reals^d$, and let $\dist\colon \reals^d \times \reals^d \to \reals_{\geq 0}$ be a distance function (not necessarily a metric satisfying triangle inequality).
For a set $Y \subseteq \reals^d$, we define $\EMPH{$\dist(p, Y)$} \coloneqq \min_{y \in Y} \dist(p, y)$.
A \EMPH{$k$-clustering} of $X$ is a partition of $X$ into $k$ non-empty \EMPH{clusters} $X_1, \dots, X_k$.
We focus on center-based clusterings that are induced by a set $S \coloneqq \{ c_1, \dots, c_k \}$ of $k$ \EMPH{centers}; each $X_i$ is the subset of points of $X$ that are closest to $c_i$ in $S$ under $\dist$, that is, $X_i \coloneqq \Set{ p \in X \mid \dist(p, c_i) \leq \dist(p, c_j) }$ (ties are broken arbitrarily).
Assuming the nearest neighbor of each point of $X$ in $S$ is unique (under distance function $\dist$), $S$ defines a $k$-clustering of $X$.
Sometimes it is more convenient to denote a $k$-clustering by its set of centers $S$.

The quality of a clustering $S$ of $X$ is defined using a cost function \EMPH{$\cost(X, S)$}; cost function~$\cost$ depends on the distance function $\dist$, so sometimes we may use the notation $\cost_\dist$ to emphasize the underlying distance function.
The goal is to compute $S^* \coloneqq \argmin_{S} \cost(X, S)$ where the minimum is taken over all subsets $S \subset \reals^d$ of $k$ points.
Several different cost functions have been proposed, leading to various optimization problems.
We consider the following three popular variants:


\begin{itemize}\itemsep=0pt
\item \EMPH{$k$-median clustering}: the cost function is $\cost(X, S) = \sum_{p \in X}  \dist(p, S)$.
\item \EMPH{$k$-means clustering}: the cost function is $\cost(X, S) = \smash{\sum_{p \in X} (\dist(p, S))^2}$.
\item \EMPH{$k$-center clustering}: the cost function is $\cost(X, S) = \max_{p \in X}  \dist(p, S)$.
\end{itemize}

In some cases we wish $S$ to be a subset of $X$, in which case we refer to the problem as the \EMPH{discrete $k$-clustering} problem.
For example, the discrete $k$-median problem is to compute
\(
\argmin_{S \subseteq X, |S| = k} \sum_{p \in X}  \dist(p, S).
\)
The discrete $k$-means and discrete $k$-center problems are defined analogously.

Given point set $X$, distance function $\dist$, and cost function $\cost$, we refer to $(X, \dist, \cost)$ as a \EMPH{clustering instance}.
If $\cost$ is defined directly by the distance function $\dist$, we use $(X, \dist)$ to denote a clustering instance.
Note that a center of a set of points may not be unique (e.g.\ when $\dist$ is defined by the $L_1$-metric and $\cost$ is the sum of distances) or it may not be easy to compute (e.g.\ when $\dist$ is defined by the $L_2$-metric and $\cost$ is the sum of distances).



\subparagraph*{Stability.}
Let $X$ be a point set in Euclidean space $\reals^d$.
For $\alpha \ge 1$, a clustering instance $(X, \dist, \cost_\dist)$ is \EMPH{$\alpha$-stable} if for any \emph{perturbed distance function} $\tilde\dist$ (not necessary a metric) satisfying  $\dist(p, q) \leq \tilde\dist(p, q) \leq \alpha \cdot \dist(p,q)$ \text{ for all } $p, q \in \reals^d,$ any optimal clustering of $(X, \dist, \cost_\dist)$
is also an optimal clustering of $(X, \tilde\dist, \cost_{\tilde\dist})$.
Note that the cluster centers as well as the cost of optimal clustering may be different for the two instances.
We exploit the following property of stability, which follows directly from its definition.

\begin{lemma}
\label{L:unique-optimal}
Let $(X, \dist)$ be an $\alpha$-stable clustering instance with $\alpha > 1$.
Then the optimal clustering $O$ of $(X, \dist)$ is unique.
\end{lemma}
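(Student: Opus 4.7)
The plan is to argue by contradiction: suppose $(X, \dist)$ admits two distinct optimal clusterings $O_1, O_2$ with center sets $C_1, C_2$. I would construct a perturbed distance function $\tilde\dist$ with $\dist \le \tilde\dist \le \alpha\,\dist$ under which $O_2$ is strictly costlier than $O_1$, contradicting $\alpha$-stability (which requires $O_2$ to remain optimal under any such $\tilde\dist$).

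The main case is $C_1 \ne C_2$. I would set
\[
\tilde\dist(p, q) = \begin{cases} \alpha\cdot\dist(p, q) & \text{if } q \in C_2 \setminus C_1, \\ \dist(p, q) & \text{otherwise,} \end{cases}
\]
which is a valid $\alpha$-perturbation (the definition of stability allows $\tilde\dist$ to be asymmetric and non-metric). All distances to $C_1$-centers are preserved, so $\cost_{\tilde\dist}(X, C_1) = \cost^*$; and pointwise $\tilde\dist \ge \dist$ gives $\cost_{\tilde\dist}(X, C_2) \ge \cost^*$. The crux is showing the second inequality is strict. Writing $A_p = \min_{c \in C_1 \cap C_2} \dist(p, c)$ and $B_p = \min_{c \in C_2 \setminus C_1} \dist(p, c)$, equality at every $p$ would force $A_p \le B_p$ and hence $\cost(X, C_1 \cap C_2) = \sum_p A_p = \cost^*$; but $|C_1 \cap C_2| < k$, so augmenting $C_1 \cap C_2$ with any point $a \in X \setminus (C_1 \cap C_2)$ strictly decreases the cost (its own contribution drops from $A_a > 0$ to $\dist(a, a) = 0$), and continuing the augmentation up to $k$ centers (possible since $n \ge k$) produces a $k$-clustering of cost strictly below $\cost^*$, contradicting optimality. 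Hence strictness holds at some $p$ and $\cost_{\tilde\dist}(X, C_2) > \cost^* = \cost_{\tilde\dist}(X, C_1)$, violating $\alpha$-stability.

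The remaining case is $C_1 = C_2 =: C$, in which $O_1$ and $O_2$ differ only through tie-breaking at some point $p$ equidistant under $\dist$ from two centers $c, c' \in C$ (assigned to $c$ in $O_1$ and to $c'$ in $O_2$). Choosing $\eps \in (0, \alpha-1)$ and defining $\tilde\dist(p, c') = (1+\eps)\,\dist(p, c')$ while leaving every other distance unchanged strictly breaks the tie in favor of $c$, so the partition $O_2$ is strictly costlier than $O_1$ under $\tilde\dist$, again contradicting stability. The principal obstacle is the strictness argument in the first case: a priori the perturbation might leave $\cost_{\tilde\dist}(X, C_2) = \cost^*$, and ruling this out requires the augmentation-by-singletons step, which relies on Euclidean distances between distinct points being strictly positive.
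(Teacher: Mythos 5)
Your high-level strategy---exhibit a perturbed distance $\tilde\dist$ under which one of the two putative optima becomes strictly worse---is the same as the paper's, but the strictness argument in the main case $C_1 \neq C_2$ has a gap. The equality $\min(A_p,\alpha B_p)=\min(A_p,B_p)$ at a point $p$ forces $A_p\le B_p$ \emph{or} $B_p=0$, not $A_p\le B_p$ unconditionally; the second alternative occurs precisely when $p$ coincides with a center of $C_2\setminus C_1$, which in the discrete setting (the one you implicitly work in when you invoke $\dist(a,a)=0$) happens for every such center. For those $p$ we have $A_p>0=B_p$, so $\cost(X,C_1\cap C_2)=\sum_p A_p=\cost^*+\sum_{c\in C_2\setminus C_1}A_c>\cost^*$, i.e.\ the identity your argument needs is false. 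The augmentation step then merely cancels this surplus: re-inserting the $|C_2\setminus C_1|$ missing centers reproduces $C_2$ with cost exactly $\cost^*$, so no contradiction with optimality is reached, and strictness is never certified.

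A concrete instance where the argument stalls: $X=\{0,2,4\}\subset\reals$, $k=2$, $k$-median, with $C_1=\{2,4\}$ inducing $(\{0,2\},\{4\})$ and $C_2=\{0,2\}$ inducing $(\{0\},\{2,4\})$, both of cost $2$. Your $\tilde\dist$ scales only distances to the single new center $0$, and a direct check gives $\cost_{\tilde\dist}(X,C_2)=0+0+\min(4\alpha,2)=2=\cost_{\tilde\dist}(X,C_1)$: the reassigned point $4$ moves from center $4$ to center $2\in C_1\cap C_2$, whose distance your perturbation never touches. (This instance is of course not $\alpha$-stable---consistent with the lemma---but it shows the strict inequality does not follow from the assumptions your proof actually uses.) The paper instead inflates \emph{all} inter-cluster distances of one fixed optimal partition $O$, so that any point that the other optimum places in a different $O$-cluster has its cost strictly increased; in the example, $\dist(4,2)$ is scaled because $4$ and $2$ lie in different $O$-clusters, and that is what delivers strictness. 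To close the case $C_1\neq C_2$ you would need a perturbation of that strength, or some other mechanism that provably penalizes a reassigned point.
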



\begin{proof}
Assume for contradiction that there are two optimal clusterings $O$ and $O'$.
There must be a point $p$ in $X$ that belongs to a cluster centered at $c$ in $O$ but is assigned to a different center $c'$ in $O'$.
Consider the perturbed distance $\tilde\delta$ by scaling inter-cluster distances in $O$ by an $\alpha$ factor while preserving all intra-cluster distances.  Then
\[
\alpha \cdot \dist(p,c) \le \alpha \cdot \dist(p,c') = \tilde\dist(p,c') \le \tilde\dist(p,c) = \dist(p,c),
\]
where the first inequality is by definition of clustering $O$,
the second inequality is by definition of clustering $O'$ still being optimal under $\tilde\dist$ by $\alpha$-stability, and the two equalities are follows from how the perturbed distance is defined.
This give a contradiction as long as $\alpha > 1$.
\end{proof}

\subparagraph*{Metric approximations.}
The next lemma, which we rely on heavily throughout the paper, is the observation that a change of metric preserves the optimal clustering as long as the new metric is a $\beta$-approximation of the original metric satisfying $\beta < \alpha$.

\begin{lemma}
\label{L:change-metric}
Given point set $X$, let $\dist$ and $\dist'$ be two metrics satisfying $\dist(p, q) \leq \dist'(p, q) \leq \beta \cdot \dist(p, q)$ for all $p$ and $q$ in $X$ for some $\beta$.
Let $(X, \dist)$ be an $\alpha$-stable clustering instance with $\alpha > \beta$.
Then the optimal clustering of $(X, \dist)$ is also the optimal clustering of $(X, \dist')$, and vice versa.
Furthermore, $(X, \dist')$ is an $(\alpha/\beta)$-stable clustering instance.
\end{lemma}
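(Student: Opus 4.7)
The plan is to exploit the fact that $\dist'$, and indeed any $(\alpha/\beta)$-perturbation of $\dist'$, automatically fits inside the window $[\dist, \alpha\,\dist]$ and hence serves as a legitimate $\alpha$-perturbation of $\dist$. All conclusions about optimal clusterings can then be read off from the $\alpha$-stability of $(X, \dist)$.

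First I would note that $\dist'$ itself is a valid $\alpha$-perturbation of $\dist$: the hypothesis gives $\dist \le \dist' \le \beta\,\dist$, and the assumption $\alpha > \beta$ upgrades the upper bound to $\dist' \le \alpha\,\dist$. Since $\alpha > 1$, Lemma~\ref{L:unique-optimal} furnishes a unique optimum $O$ of $(X, \dist)$; the definition of $\alpha$-stability, applied with the perturbation $\dist'$, then immediately yields that $O$ is optimal for $(X, \dist', \cost_{\dist'})$.

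For the $(\alpha/\beta)$-stability of $(X, \dist')$, I would take an arbitrary perturbation $\tilde\dist'$ with $\dist' \le \tilde\dist' \le (\alpha/\beta)\,\dist'$ and chain the inequalities
\[
\dist \;\le\; \dist' \;\le\; \tilde\dist' \;\le\; (\alpha/\beta)\,\dist' \;\le\; (\alpha/\beta)\cdot\beta\,\dist \;=\; \alpha\,\dist.
\]
Thus $\tilde\dist'$ is also an $\alpha$-perturbation of $\dist$, so by $\alpha$-stability $O$ remains optimal for $(X, \tilde\dist', \cost_{\tilde\dist'})$. Specializing to $\tilde\dist' = \dist'$ recovers the previous step, and in general this is the content of stability, provided one knows $O$ is the only optimum to worry about for $(X, \dist')$.

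The ``vice versa'' direction of the first claim closes by invoking Lemma~\ref{L:unique-optimal} on $(X, \dist')$: since $\alpha/\beta > 1$, the preceding step gives $(\alpha/\beta)$-stability and hence a unique optimum for $(X, \dist')$, which must equal the $O$ already identified as optimal there. The main subtlety I foresee lies in the logical sequencing: the definition of stability only transfers optima \emph{from} the base instance \emph{to} its perturbations, whereas the statement also asks for the reverse inclusion. Care is needed to ensure that the proof of $(\alpha/\beta)$-stability of $(X, \dist')$ does not covertly presuppose uniqueness of its own optimum; the argument above sidesteps this by working uniformly with the distinguished optimum $O$ inherited from $(X, \dist)$, with uniqueness for $(X, \dist')$ deduced only at the very end.
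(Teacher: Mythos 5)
Your proof matches the paper's almost verbatim: recognize $\dist'$ itself as a legitimate $\alpha$-perturbation of $\dist$ to transfer the optimum $O$, chain the inequalities so that any $(\alpha/\beta)$-perturbation of $\dist'$ becomes an $\alpha$-perturbation of $\dist$, and close by invoking Lemma~\ref{L:unique-optimal} to get uniqueness for $(X,\dist')$. The logical subtlety you flag — that establishing $(\alpha/\beta)$-stability really requires transferring \emph{every} optimum of $(X,\dist')$, not just the distinguished $O$, yet uniqueness for $(X,\dist')$ is being deduced via the very stability one is in the middle of proving — is present in the paper's proof as well and handled there with the same (implicit) shortcut, so the two arguments are on identical footing.
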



\begin{proof}
Because $(X, \dist)$ is $\alpha$-stable for $\alpha > \beta$, the optimal clustering of $(X, \dist)$ is also an optimal clustering of $(X, \dist')$ by taking $\dist'$ to be the perturbed distance.
Now, for an arbitrary perturbed distance $\tilde\dist'$ satisfying $\dist'(p, q) \leq \tilde\dist'(p, q) \leq (\alpha/\beta) \cdot \dist'(p,q)$ for all $p, q \in X$, one has
\[
\dist(p,q) \le \dist'(p, q) \leq \tilde\dist'(p, q) \leq \frac{\alpha}{\beta} \cdot \dist'(p,q) \le \alpha \cdot \dist(p,q),
\]
and therefore the optimal clustering $O$ of $(X,\dist)$ and $(X,\dist')$ is must be an optimal clustering of $(X,\tilde\dist')$, proving that $(X, \dist')$ is $(\alpha/\beta)$-stable.
Providing $\alpha > \beta$, the optimal clustering of $(X, \dist')$ is again unique by Lemma~\ref{L:unique-optimal}; in other words, the optimal clustering of $(X, \dist')$ is by definition equal to the optimal clustering of $(X,\dist)$.
\end{proof}

%

\subparagraph*{Polyhedral metric.}
In light of the metric approximation lemma, we would like to approximate the Euclidean metric without losing too much stability, using a collection of convex distance functions generalizing the $L_\infty$-metric in Euclidean space.
Let $\EMPH{$N$} \subseteq S^{d-1}$ be a centrally-symmetric set of $\gamma$ unit vectors (that is, if $u \in N$ then $-u \in N$) such that for any unit vector $v \in S^{d - 1}$, there is a vector $u \in N$ within angle $\arccos(1-\e) = O(\sqrt{\e})$.
The number of vectors needed in $N$ is known to be $O(\e^{-(d-1)/2})$.
%
We define the \EMPH{polyhedral metric} $\dist_N \colon \reals^d \times \reals^d \to \reals_{\geq 0}$ to be
$\EMPH{$\dist_N(p, q)$} \coloneqq \max_{u \in N} \seq{ p - q, u }$.


Since $N$ is centrally symmetric, $\dist_N$ is symmetric and thus a metric.
The unit ball under $\dist_N$ is a convex polyhedron, thus the name polyhedral metric.
By construction, an easy calculation shows that for any $p$ and $q$ in $\reals^d$,
\(
\norm{p - q} \geq \dist_N(p, q) \geq (1-\e) \cdot \norm{p - q}.
\)
By scaling each vector in $N$ by a $1+\e$ factor, we can ensure that $(1+\e) \cdot \norm{p - q} \geq \dist_N(p, q) \geq \norm{p - q}$. By taking $\e$ to be small enough, the optimal clustering for $\alpha$-stable clustering instance $(X, \norm{\cdot}, \cost)$ is the same as that for $(X, \dist_N, \cost)$ by Lemma~\ref{L:change-metric}, and the new instance $(X, \dist_N, \cost)$ is $(1-\e)\alpha$-stable if the original instance $(X, \norm{\cdot}, \cost)$ is $\alpha$-stable.

\subparagraph*{Center proximity.}
A clustering instance $(X, \dist)$ satisfies \emph{$\alpha$-center proximity property}~\cite{abs-cbcps-2012} if for any distinct optimal clusters $X_i$ and $X_j$ with centers $c_i$ and $c_j$ and any point $p \in X_i$, one has
\(
\alpha \cdot \dist(p, c_i) < \dist(p, c_j).
\)
Awasthi, Blum, and Sheffet showed that any $\alpha$-stable instance satisfies $\alpha$-center proximity~\cite[Fact~2.2]{abs-cbcps-2012} (also~\cite[Theorem~3.1]{amm-aspp-2017} under metric perturbation).
Optimal solutions of $\alpha$-stable instances satisfy the following separation properties.\footnote{We give an additional list of known separation properties in Appendix~\ref{S:stability-properties}.}

\begin{itemize}\itemsep=0pt
\item $\alpha$-center proximity implies that
\( (\alpha - 1) \cdot \dist(p, c_i) < \dist(p, q) \text{ for any } p \in X_i \text{ and any } q \not \in X_i. \)
For $\alpha \geq 2$, a point is closer to its own center than to any point of another cluster.%
\footnote{They are known as \emph{weak center proximity}~\cite{bhw-kccpr-2016} and \emph{strict separation property}~\cite{bbv-dfcsf-2008,br-dsccl-2014} respectively.\label{F:separation}}
\item  For $\alpha \geq 2 + \sqrt{3}$, $\alpha$-center proximity implies that
 \( \dist(p,p') < \dist(p,q)  \text{ for any }  p,p' \in X_i \text{ and any }  q \not \in X_i. \)
In other words, from any point $p$ in $X$, any \emph{intra}-cluster distance to a point $p'$ is shorter than any \emph{inter}-cluster distance to a point $q$.\footref{F:separation}
\end{itemize}
We make use of the following stronger intra-inter distance property on $\alpha$-stable instances, which allows us to compare \emph{any} intra-distance between two points in $X_i$ and \emph{any} inter-distance between a point in $X_i$ and a point in $X_j$.
\begin{lemma}
\label{L:intra-inter}
    Let $(X,\delta)$ be an $\alpha$-stable instance, $\alpha > 1$, and let $X_1$ be a cluster in an optimal clustering with $q \in X \setminus X_1$ and $p,p',p'' \in X_1$. If $\delta$ is a metric, then $\delta(p,p') \le  \delta(p'',q)$ for $\alpha \ge 2 + \sqrt{5}$. If $\delta$ is the Euclidean metric in $\mathbb{R}^d$, then $\delta(p,p') \le  \delta(p'',q)$ for $\alpha \ge 2 + \sqrt{3}$.
\end{lemma}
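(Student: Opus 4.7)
The plan is to use $\alpha$-center proximity to confine each optimal cluster to a small region around its center, and then bound the diameter of the region containing $X_1$ (an upper bound on $\delta(p, p')$) against the gap between the regions containing $X_1$ and the cluster $X_j$ that contains $q$ (a lower bound on $\delta(p'', q)$).  Let $c_1$ and $c_j$ denote the respective optimal centers, set $D = \delta(c_1, c_j)$, and note that for every $x \in X_1$ the chain $\alpha \, \delta(x, c_1) \le \delta(x, c_j) \le \delta(x, c_1) + D$ gives $\delta(x, c_1) \le D/(\alpha - 1)$; symmetrically $\delta(q, c_j) \le D/(\alpha - 1)$.

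For a general metric, I would upper-bound $\delta(p, p') \le \delta(p, c_1) + \delta(p', c_1) \le 2 D/(\alpha - 1)$ by the triangle inequality alone.  For the lower bound on $\delta(p'', q)$, set $a = \delta(p'', c_1)$ and $b = \delta(q, c_j)$ and combine three triangle-inequality estimates,
\[
\delta(p'', q) \ge \max\bigl(\alpha a - b,\ \alpha b - a,\ D - a - b\bigr),
\]
where the first two expressions use $\delta(p'', c_j) \ge \alpha a$ and $\delta(q, c_1) \ge \alpha b$.  Minimizing this maximum over $a, b \ge 0$ is achieved by equalizing the three terms, which forces $a = b = D/(\alpha + 1)$ and yields $\delta(p'', q) \ge (\alpha - 1)D/(\alpha + 1)$.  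The desired inequality $2D/(\alpha - 1) \le (\alpha - 1)D/(\alpha + 1)$ then rearranges to $\alpha^2 - 4\alpha - 1 \ge 0$, i.e., $\alpha \ge 2 + \sqrt{5}$.

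For the Euclidean case I would strengthen both bounds via the classical Apollonius-ball characterization.  Placing $c_1$ at the origin and $c_j$ at $(D, 0, \dots, 0)$, the locus $\{x \in \reals^d : \|x - c_j\| \ge \alpha \|x - c_1\|\}$ is a closed Euclidean ball $B_1$ of radius $r = D \alpha/(\alpha^2 - 1)$ whose center lies on the line through $c_1 c_j$; by symmetry the reflected ball $B_j$ contains $X_j$.  Since $\alpha$-center proximity forces $X_1 \subseteq B_1$ and $X_j \subseteq B_j$, we get $\delta(p, p') \le \operatorname{diam}(B_1) = 2D \alpha/(\alpha^2 - 1)$ and $\delta(p'', q) \ge \operatorname{dist}(B_1, B_j) = D(\alpha - 1)/(\alpha + 1)$, computed as the distance between the centers of $B_1$ and $B_j$ minus $2r$.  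The inequality $2 D \alpha/(\alpha^2 - 1) \le D(\alpha - 1)/(\alpha + 1)$ then simplifies to $\alpha^2 - 4\alpha + 1 \ge 0$, i.e., $\alpha \ge 2 + \sqrt{3}$.

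The main conceptual step, and the reason behind the $\sqrt{5}$-versus-$\sqrt{3}$ gap, is that the triangle-inequality bound $\delta(x, c_1) \le D/(\alpha - 1)$ only constrains the radial distance from $c_1$, whereas in Euclidean space the quadratic Apollonius relation additionally constrains the location of $x$ transverse to the axis $c_1 c_j$, shrinking both the enclosing ball and the inter-cluster gap in a coordinated way.  I expect the only delicate step to be verifying that the extremal configurations for $\operatorname{diam}(B_1)$ and $\operatorname{dist}(B_1, B_j)$ really are realized along the line $c_1 c_j$; after that, both claims collapse to single quadratic inequalities in $\alpha$.
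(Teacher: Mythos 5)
Your proposal is correct and follows essentially the same route as the paper: for the general metric, both use center proximity plus the triangle inequality to bound $\delta(p,p')$ by $2D/(\alpha-1)$ and to lower-bound $\delta(p'',q)$ by $\frac{\alpha-1}{\alpha+1}D$ (the paper chains the three triangle estimates directly, while you package them as a min-max, but the bound and the required quadratic $\alpha^2-4\alpha-1\ge 0$ are the same); and for the Euclidean case, both invoke the Apollonius-ball containment and compute $\operatorname{diam}(B_1)=\frac{2\alpha D}{\alpha^2-1}$ and $\operatorname{dist}(B_1,B_j)=\frac{\alpha-1}{\alpha+1}D$ to arrive at $\alpha^2-4\alpha+1\ge 0$.
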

\noindent \emph{Proof.} See Appendix~\ref{S:stability-properties}.
\noindent Finally, we note that it is enough to consider the discrete version of the clustering problem for stable instances.

\begin{lemma}
\label{lem:discreteOpt}
For any $\alpha$-stable instance $(X,\dist, \cost_\dist)$ with $\alpha \ge 2+\sqrt{3}$, any continuous optimal $k$-clustering is a discrete optimal $k$-clustering and vice versa.
\end{lemma}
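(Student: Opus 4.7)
The plan is to exploit $\alpha$-stability by introducing a perturbed distance that penalizes every candidate center lying outside $X$, thereby forcing the continuous optimum under the perturbation to coincide with a discrete one. Define
\[
\tilde\dist(p,q) \coloneqq
\begin{cases}
\dist(p,q) & \text{if } p,q \in X, \\
\alpha \cdot \dist(p,q) & \text{otherwise.}
\end{cases}
\]
This function is symmetric and satisfies $\dist(p,q) \le \tilde\dist(p,q) \le \alpha \cdot \dist(p,q)$ for every $p,q \in \reals^d$, so $\tilde\dist$ is a legitimate $\alpha$-perturbation of $\dist$. By $\alpha$-stability and Lemma~\ref{L:unique-optimal}, the unique continuous optimal partition of $(X, \tilde\dist, \cost_{\tilde\dist})$ equals the unique continuous optimal partition $P = \{X_1, \dots, X_k\}$ of $(X, \dist, \cost_\dist)$.

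The crucial step is to argue that, under $\tilde\dist$, every cluster $X_i \in P$ prefers an optimal $1$-center drawn from $X$ over any center outside $X$. A center $c \notin X$ incurs $\tilde\dist$-cost equal to $\alpha$ (respectively $\alpha^2$, $\alpha$) times the continuous $1$-median (respectively $1$-means, $1$-center) cost of $X_i$ under $\dist$. On the other hand, the point of $X_i$ closest to the continuous $1$-center is a feasible discrete center whose $\dist$-cost is at most twice the corresponding continuous $1$-cost, via a standard triangle-inequality argument for $k$-median and $k$-center and a Pythagorean identity about the centroid for $k$-means. Since $\alpha \ge 2 + \sqrt{3} > 2$ (and $\alpha^2 > 2$ trivially), the discrete choice strictly beats any center outside $X$ in every objective.

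Combining these two steps, the continuous optimum of $(X, \tilde\dist, \cost_{\tilde\dist})$ is attained by some set $S^* \subseteq X$, and its induced partition is $P$. Because $\tilde\dist$ coincides with $\dist$ on $X \times X$, one has $\cost_{\tilde\dist}(X, S) = \cost_\dist(X, S)$ for every $S \subseteq X$, so the discrete $\dist$-optimum and the continuous $\tilde\dist$-optimum are realized by the same clustering and in particular induce the same partition $P$. Together with the uniqueness of the continuous $\dist$-optimum (Lemma~\ref{L:unique-optimal}), both directions of the equivalence follow. The main obstacle in carrying this out is verifying the factor-$2$ bound between the discrete and continuous $1$-cost uniformly across the three objectives; the computation is routine but objective-specific, and the bound sits comfortably below $\alpha \ge 2+\sqrt{3}$. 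Everything else is immediate from stability, Lemma~\ref{L:unique-optimal}, and the validity of $\tilde\dist$ as a perturbation.
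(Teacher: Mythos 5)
You take a genuinely different route from the paper. The paper argues geometrically: it lets $\nn_i$ be the input point nearest to the continuous optimal center $o_i$, uses $\alpha$-center proximity to place $\nn_i$ inside cluster $X_i$, and then invokes the intra/inter-cluster separation property (the source of the $2+\sqrt{3}$ hypothesis, via \cite{br-dsccl-2014}) to show that the $\nn_i$'s induce the same partition $P$. You instead build the perturbation $\tilde\dist$ that inflates every distance to a non-input point by $\alpha$, use stability to freeze the optimal partition at $P$, and argue by a factor-$2$ bound (triangle inequality for median/center, Pythagorean identity about the centroid for means) that under $\tilde\dist$ each cluster's best center can be taken in $X$, from which the discrete/continuous equivalence drops out. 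Your approach is more self-contained --- it avoids the separation properties of Appendix~\ref{S:stability-properties} entirely --- and needs only $\alpha>2$ for median/center (and $\alpha>\sqrt{2}$ for means), so it would in fact establish the lemma under a weaker hypothesis than $\alpha\ge 2+\sqrt{3}$.

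One caution. Your $\tilde\dist$ is not a metric: for $p\notin X$ and $q,r\in X$ one can have $\tilde\dist(p,q)+\tilde\dist(q,r)=\alpha\dist(p,q)+\dist(q,r)<\alpha\dist(p,r)=\tilde\dist(p,r)$. So neither Lemma~\ref{L:change-metric} nor Lemma~\ref{L:unique-optimal} applies to $(X,\tilde\dist)$ directly, and the phrase ``unique continuous optimal partition of $(X,\tilde\dist,\cost_{\tilde\dist})$'' is not justified as written; the stability definition only gives you that $P$ is \emph{some} optimal partition under $\tilde\dist$. That is already enough for the forward direction: some $\tilde\dist$-optimal centers lie in $X$ and induce $P$, and equality of $\cost_{\tilde\dist}$ and $\cost_\dist$ on $X\times X$ makes them a discrete $\dist$-optimum. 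For the converse (every discrete optimum induces $P$) you would need either uniqueness under $\tilde\dist$ or a separate argument; the paper's own proof has a parallel soft spot at this exact step (it shows $O'$ induces $P$ but does not explicitly argue $O'$ minimizes cost over all discrete candidates), so this is not a defect peculiar to your plan, just a place where both writeups would benefit from one more sentence.
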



\begin{proof}
Consider $O$ to be an optimal solution of an arbitrary $\alpha$-stable instance $(X,\dist)$ in the continuous setting; denote the centers in $O$ as $o_1, \dots, o_k$.
Define solution $O'$ to be the set of centers $\nn_1, \dots, \nn_k$, where $\nn_i$ is defined to be the nearest point of $o_i$ in $X$.
By definition $O'$ is a discrete solution as all centers $\nn_i$ lie in $X$.
We now argue that $O'$ is in fact an optimal solution of the $k$-clustering instance $(X,\dist)$.

First we show that $\nn_i$ must be a point that was assigned to $o_i$ in clustering $O$.
Assume for contradiction that $\nn_i$ was in a different cluster with center $o_j$.
Let $p$ be an arbitrary point in cluster $O_i$.
By center proximity one has $\dist(p,\nn_i) > (\alpha-1) \cdot \dist(p,o_i)$.
But then this implies $(\alpha-1) \cdot \dist(p,o_i) < \dist(p,\nn_i) \le \dist(p,o_i) + \dist(o_i,\nn_i)$,
that is, $\dist(p,o_i) \le (\alpha-2) \cdot \dist(p,o_i) < \dist(o_i,\nn_i)$ given $\alpha \ge 3$, a contradiction.

Now again take an arbitrary point $p$ in some arbitrary cluster $O_i$.
Compare the distances $\dist(p,\nn_i)$ and $\dist(p,\nn_j)$ for any other center $\nn_j$ in $O'$.
By \cite[Theorem~8]{br-dsccl-2014} for $\alpha > 2 + \sqrt{3}$ any intra-cluster distance is smaller than any inter-cluster distance.
Thus, $\dist(p,\nn_i) < \dist(p,\nn_j)$ since $\nn_i$ lies in $O_i$ and $\nn_j$ lies in $O_j$.
Therefore the clustering formed by the centers in $O'$ is identical to the clustering of $O$, thus proving that $O'$ is an optimal solution of $(X,\dist)$.
\end{proof}



\section{Efficient Dynamic Programming}
\label{S:dp}
We now describe a simple, efficient algorithm for computing the optimal clustering for the $k$-means, $k$-center, and $k$-median problem assuming the given instance is $\alpha$-stable for $\alpha \geq 2 + \sqrt{3}$.
Roughly speaking, we make the following observation: if there are at least two clusters, then the two endpoints of the longest edge of the minimum spanning tree of $X$ belong to different clusters, and no cluster has points in both subtrees of the minimum spanning tree delimited by the longest edge.
We describe the dynamic programming algorithm in Section~\ref{SS:fast-dp} and then describe the procedure for computing cluster costs in Section~\ref{SS:1-clustering}.
We summarize the results in this section by the following theorem.

\begin{theorem}
\label{thm:dp-main}
    Let $X$ be a set with $n$ points lying in $\reals^d$ and $k \geq 1$ an integer.
    If the $k$-means, $k$-median, or $k$-center instance for $X$ under the Euclidean metric is $\alpha$-stable for $\alpha \ge 2 + \sqrt{3} + \eps$ for any constant $\eps > 0$, then the optimal clustering can be computed in $O(nk^2 + n \polylog n)$ time.
    For $d = 2$ the assumption can be relaxed to $\alpha \ge 2 + \sqrt{3}$.
\end{theorem}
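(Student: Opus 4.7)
The plan is to combine three ingredients: a change of metric from Euclidean to a polyhedral approximation (so that MSTs and subsequent data-structural operations become near-linear), the observation that the longest MST edge cleanly partitions the points so that no optimal cluster straddles it, and a tree dynamic program over the resulting hierarchy whose bottleneck---the cost of $1$-clusterings---is handled by the data structure of Section~\ref{SS:1-clustering}.  I would first invoke Lemma~\ref{L:change-metric} with a polyhedral metric $\dist_N$ satisfying $\norm{\cdot} \leq \dist_N \leq (1 + \eps') \norm{\cdot}$ for small enough $\eps' > 0$ (specifically $\eps' < \eps/(2 + \sqrt{3})$), so that the new instance over $X$ with metric $\dist_N$ has the same optimal clustering and is $(2 + \sqrt{3})$-stable; Lemma~\ref{L:intra-inter} therefore applies under $\dist_N$.  (For $d = 2$ this step can be skipped, since the Euclidean MST already admits an $O(n \log n)$ construction and $\alpha \geq 2 + \sqrt{3}$ suffices as stated.)  Then compute the MST $T$ of $X$ under $\dist_N$ in $O(n \polylog n)$ time by standard geometric MST algorithms for fixed-dimensional polyhedral metrics, and build a binary hierarchy tree $H$ by recursively removing the longest edge of each subtree of $T$; each node $v \in H$ corresponds to a subset $S_v \subseteq X$, giving $n$ leaves and $n - 1$ internal nodes.

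The structural claim to prove is that every optimal cluster equals $S_v$ for some $v \in H$, which I would show by induction from the root.  For the base step, let $e = (a,b)$ be the longest edge of $T$, set $\ell = \dist_N(a,b)$, and let $X_L, X_R$ be the two components of $T - e$; by the MST cut property $\ell$ is the minimum $\dist_N$-distance between $X_L$ and $X_R$, so any pair $(p, p')$ with $p \in X_L, p' \in X_R$ has $\dist_N(p, p') \geq \ell$.  If some optimal cluster $X_i$ contained such a pair, tracing the MST path from $p$ to $p'$---whose edges all have length at most $\ell$---would reveal an MST edge $(w, w')$ from a point in $X_i$ to a point outside $X_i$, an inter-cluster pair of length at most $\ell$.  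Applying Lemma~\ref{L:intra-inter} to this intra/inter pair in $X_i$ would then force $\dist_N(w, w') \geq \ell$, and therefore $\dist_N(w, w') = \ell$; such ties are handled by a standard general-position tie-breaking rule on $\dist_N$.  Recursing on each side yields the full claim.

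Given the structural claim, the dynamic program is immediate.  For each $v \in H$ and each $k' \in \{1, \ldots, k\}$, let $C(v, k')$ denote the cost of the optimal $k'$-clustering of $S_v$.  Then $C(v, 1)$ is the $1$-clustering cost of $S_v$, and for $k' \geq 2$,
\[
C(v, k') \;=\; \min_{1 \leq j \leq k' - 1} \bigl( C(v_L, j) + C(v_R, k' - j) \bigr),
\]
where $v_L, v_R$ are the children of $v$; the answer is $C(\text{root}, k)$.  The tree DP runs in $O(nk^2)$ time in the worst case.  The main obstacle, and the only place where near-linearity is non-trivial, is evaluating the $2n - 1$ values $C(v, 1)$: a naive computation per node costs $\Omega(|S_v|)$ time and totals $\Omega(n^2)$.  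This is exactly the role of Section~\ref{SS:1-clustering}, which processes $H$ bottom-up and, exploiting the polyhedral structure of $\dist_N$, maintains the best $1$-clustering cost under the $k$-means, $k$-median, and $k$-center objectives across merges in amortized polylogarithmic time, yielding $O(n \polylog n)$ for all nodes and the claimed $O(nk^2 + n \polylog n)$ total.
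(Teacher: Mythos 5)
Your proposal follows the same route as the paper: reduce to a polyhedral metric via Lemma~\ref{L:change-metric}, observe that the longest MST edge cleanly separates the optimal clusters, build the Kruskal recursion tree, run the tree DP of recurrence~(\ref{eq:dp}), and pay for the $2n-1$ values $\mu(\cdot;1)$ using the weighted-union (smaller-into-larger) insertions with the data structures of Section~\ref{SS:1-clustering}. So in spirit this is the paper's proof, and the outline of the DP, the $O(nk^2)$ and $O(n\polylog n)$ bookkeeping, and the $\eps'$-choice in the metric swap are all fine.

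There is one genuine gap in the key structural step. You argue non-straddling by taking $p\in X_i\cap X_L$, $p'\in X_i\cap X_R$ and claiming that the MST path from $p$ to $p'$ ``would reveal an MST edge $(w,w')$ from a point in $X_i$ to a point outside $X_i$.'' That is exactly what fails in the case you most need to rule out: if every vertex on the $p$-to-$p'$ path (including both endpoints $a,b$ of the longest edge $e$) lies in $X_i$, no such edge appears on the path. In particular, taking $p=a,p'=b$ gives a single-edge path with nothing to ``reveal.'' The paper handles this by arguing directly about $e$ itself: if $a,b\in X_i$ then, since $k\ge 2$, the cluster-contracted tree is connected so some \emph{other} MST edge $(x,y)$ leaves $X_i$, and Lemma~\ref{L:intra-inter} gives $\dist(a,b)<\dist(x,y)\le\dist(a,b)$, a contradiction. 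Your argument needs this extra case; the ``trace the $p$-to-$p'$ path'' formulation does not cover it. Related to this, the ``standard general-position tie-breaking on $\dist_N$'' is both unnecessary and shaky: the inequality in the proof of Lemma~\ref{L:intra-inter} is actually strict (it inherits strictness from $\alpha$-center proximity via the Apollonian-circle bound $\frac{2\alpha}{(\alpha-1)^2}\le 1$), so no tie arises; conversely, perturbing $\dist_N$ is dangerous because it can destroy the triangle inequality or change which MST is being analyzed. Finally, a minor point: for $d=2$ you cannot skip the metric change entirely. The Euclidean MST is indeed available in $O(n\log n)$, but the $1$-median and $1$-center subroutines of Section~\ref{SS:1-clustering} still switch to $L_1$ (invoking Lemma~\ref{L:change-metric} with $\beta=\sqrt2$); only the $1$-mean case stays Euclidean.
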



\subsection{Fast Dynamic Programming}
\label{SS:fast-dp}

The following lemma is the key observation for our algorithm.

\begin{lemma}
\label{L:longest-edge}
Let $(X, \dist, \cost)$ be an $\alpha$-stable $k$-clustering instance with $\alpha \ge 2 + \sqrt{3}$ and $k \geq 2$, and let $T$ be the minimum spanning tree of $X$ under metric $\dist$. Then
(1) The two endpoints $u$ and $v$ of the longest edge $e$ in $T$ do not belong to the same cluster;
(2) each cluster lies in the same connected component  of $T \setminus \{ e \}$.
\end{lemma}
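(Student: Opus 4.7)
The plan is to deduce both parts of Lemma~\ref{L:longest-edge} from a single structural property of the MST that is implied by Lemma~\ref{L:intra-inter}. For the Euclidean metric with $\alpha \ge 2 + \sqrt{3}$ that lemma yields
\[
\max_{p, p' \in X_i}\, \dist(p, p') \;\le\; \min_{p'' \in X_i,\, q \notin X_i}\, \dist(p'', q) \qquad \text{for each optimal cluster } X_i,
\]
so every intra-cluster distance is bounded above by every inter-cluster distance.

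With this separation in hand I would analyze Kruskal's algorithm on the complete Euclidean graph. Processing edges in nondecreasing weight order, all intra-cluster edges are handled before any inter-cluster edge. Therefore Kruskal's first builds a spanning tree on each cluster $X_i$ using only intra-cluster edges, and then glues the resulting $k$ cluster-trees together via exactly $k-1$ inter-cluster edges, producing a ``tree of clusters.'' Part~(1) follows immediately, because the longest edge of $T$ must be one of these inter-cluster edges: any intra-cluster edge has weight at most that of every inter-cluster edge, and for $k \ge 2$ at least one inter-cluster edge is present. For part~(2), deleting an inter-cluster edge $e$ from $T$ leaves each cluster's internal spanning tree intact and merely splits the tree-of-clusters into two subtrees whose vertex sets are unions of entire clusters; so no cluster straddles the cut.

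The main obstacle is the equality case of Lemma~\ref{L:intra-inter}, in which some intra-cluster distance may coincide with some inter-cluster distance. Then Kruskal's is nondeterministic and an MST could have an intra-cluster edge tied as the longest. I would handle this in one of two ways: either break ties in Kruskal's in favor of inter-cluster edges, which yields a valid MST to which the structural argument applies verbatim; or apply Lemma~\ref{L:change-metric} with a small multiplicative perturbation of $\dist$ that makes all pairwise distances distinct, noting that the perturbed instance retains stability $\alpha/(1+\eta) > 2 + \sqrt{3}$ whenever $\alpha > 2 + \sqrt{3}$. The second option dovetails with the slight strengthening $\alpha \ge 2 + \sqrt{3} + \eps$ appearing in Theorem~\ref{thm:dp-main} for $d \ge 3$, while the first option suffices at the boundary case $\alpha = 2 + \sqrt{3}$ used for $d = 2$.
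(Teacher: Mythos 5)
Your proof is correct and goes a bit further than the paper's own: the paper proves part~(1) by the same intra-vs-inter contradiction you use (if the heaviest edge were intra-cluster, the tree must also contain an inter-cluster edge, which by Lemma~\ref{L:intra-inter} would be strictly heavier), but then disposes of part~(2) by a citation to Angelidakis~\etal~[Lemma~4.1]. Your Kruskal-style analysis proves both parts in one stroke from a single structural fact (every MST restricts to a spanning tree within each optimal cluster, glued by exactly $k-1$ inter-cluster edges), which is a cleaner, self-contained route. One could tighten it by replacing the appeal to Kruskal with the exchange argument: if some cluster $X_i$ were disconnected in $T$, the tree path joining its two fragments would contain an inter-cluster edge, which could be swapped for a strictly lighter intra-cluster edge of $X_i$; this shows the structure holds for \emph{every} MST, not just the one a tie-breaking rule happens to produce.

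On the equality worry: there is none. Although Lemma~\ref{L:intra-inter} is stated with $\le$, its Apollonian-circle proof yields a \emph{strict} inequality for the Euclidean metric at $\alpha = 2+\sqrt{3}$ (the clusters lie strictly inside open balls, and the diameter-to-gap ratio equals $1$ exactly at this threshold), and the paper's proof of this very lemma already invokes the strict form $\dist(u,v) < \dist(x,y)$. So the tie-breaking discussion is unnecessary, and your reading of the $\alpha \ge 2+\sqrt{3}+\eps$ hypothesis in Theorem~\ref{thm:dp-main} as compensating for MST ties is a misattribution — that slack is consumed by the polyhedral-metric approximation of Lemma~\ref{L:change-metric} in $d \ge 3$, not by tie-breaking. (Your second repair, perturbing $\dist$, would also change which tree is ``the'' MST and so doesn't quite establish the statement for the tree the algorithm actually builds; the first repair is the right one if ties were a genuine concern.) None of this affects the validity of your core argument.
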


%
\begin{proof}
Assume for contradiction that the longest spanning tree edge $uv$ belongs to the same cluster $X_i$ in the optimal $k$-clustering $O$.
Since $k>1$, there is at least one other cluster $X_j$ of $O$ with a spanning tree edge $xy$ connecting $X_i$ to $X_j$.
Given $\alpha \ge 2+\sqrt{3}$, $d(u, v) < d(x, y)$ by Lemma~\ref{L:intra-inter}, a contradiction.
The second statement follows from Angelidakis \etal~\cite[Lemma~4.1]{amm-aspp-2017}.
\end{proof}

\subparagraph*{Algorithm.}

We fix the metric $\dist$ and the cost function $\cost$.
For a subset $Y \subseteq X$ and for an integer $j$ between $1$ and $k-1$, let \EMPH{$\mu(Y; j)$} denote the optimal cost of an $j$-clustering on $Y$ (under $\dist$ and $\cost$).
Recall that our definition of $j$-clustering required all clusters to be non-empty, so it is not defined for $|Y| < j$.
For simplicity, we assume that $\mu(Y;j) = \infty$ for $|Y| < j$.
Let \EMPH{$T$} be the minimum spanning tree on $X$ under $\dist$, let $uv$ be the longest edge in $T$; let $X_u$ and $X_v$ be the set of vertices of the two components of $T\setminus \{ uv \}$.
Then $\mu(X; k)$ satisfies the following recurrence relation:

\begin{equation}
\mu(X; k) = \begin{cases} \label{eq:dp}
  \mu(X; 1) & \text{if $k = 1$,} \\
  \infty & \text{if $k > \abs{X}$,} \\
  \min_{1 \leq i < k} \Set{ \mu(X_u; i) + \mu(X_v; k-i) } & \text{if $\abs{X} > 1$ and $k > 1$.}
\end{cases}
\end{equation}

Using recurrence~(\ref{eq:dp}), we compute $\mu(X; k)$ as follows.
Let \EMPH{$\mathsf{R}$} be a \emph{recursion tree}, a binary tree where each node $v$ in $\mathsf{R}$ is associated with a subtree $T_v$ of $T$.
If $v$ is the root of $\mathsf{R}$, then $T_v = T$.
Recursion tree $\mathsf{R}$ is defined recursively as follows.
Let $X_v \subseteq X$ be the set of vertices of $T$ in $T_v$. If $|X_v| = 1$, then $v$ is a leaf.
Each interior node $v$ of $T$ is also associated with the longest edge $e_v$ of $T_v$.
Removal of $e_v$ decomposes  $T_v$ into two connected components, each of which is associated with one of the children of $v$. After having computed $T$, $\mathsf{R}$ can be computed in $O(n \log n)$ time by sorting the edges in decreasing order of their costs.%
\footnote{Tree $\mathsf{R}$ is nothing but the minimum spanning tree constructed by Kruskal's algorithm.}

For each node $v \in \mathsf{R}$ and for every $i$ between $1$ and $k-1$, we compute $\mu(X_v; i)$ as follows.
If $v$ is a leaf, we set $\mu(X_v;1) = 0$ and $\mu(X_v;i) = \infty$ otherwise.
For all interior nodes $v$, we compute $\mu(X_v;1)$ using the algorithms described in Section~\ref{SS:1-clustering}.
Finally, if $v$ is an interior node and $i > 1$, we compute $\mu(X_v;i)$ using the recurrence relation (\ref{eq:dp}).
Recall that if $w$ and $z$ are the children of $v$, then $\mu(X_w; \ell)$ and $\mu(X_z; r)$ for all $\ell$ and $r$ have been computed before we compute $\mu(X_v;i)$.

Let $\tau(n)$ be the time spent in computing $T$ plus the total time spent in computing $\mu(X_v, 1)$ for all nodes $v \in \mathsf{R}$. Then the overall time taken by the algorithm is $O(nk^2 + \tau(n))$.
What is left is to compute the minimum spanning tree $T$ and all $\mu(X_v, 1)$ efficiently.

\subsection{Efficient Implementation}
\label{SS:1-clustering}


In this section, we show how to obtain the minimum spanning tree and compute $\mu(X_v; 1)$ efficiently for $1$-mean, $1$-center, and $1$-median when $X \subseteq \reals^d$.
%
We can compute the Euclidean minimum spanning tree $T$ in $O(n\log n)$ time in $\reals^2$~\cite{toth2017handbook}.
We can then compute $\mu(X_v; 1)$ efficiently either under Euclidean metric (for $1$-mean), or
switch to the $L_1$-metric and compute $\mu(X_v; 1)$ efficiently using Lemma~\ref{L:change-metric} (for $1$-center and $1$-median).

There are two difficulties in extending the 2D data structures to higher dimensions.
No near-linear time algorithm is known for computing the Euclidean minimum spanning tree for $d \geq 3$, and we can work with the $L_1$-metric only if $\alpha \geq \sqrt{d}$ (Lemma~\ref{L:change-metric}).
We address both of these difficulties by working with a polyhedral metric $\dist_N$.
Let $\alpha \ge 2 + \sqrt{3} + \Omega(1)$ be the stability parameter.
By taking the number of vectors in $N$ (defined by the polyhedral metric) to be large enough, we can ensure that $(1 - \e) \norm{p - q} \leq \delta_N(p, q) \leq \norm{p - q}$ for all $p, q \in \reals^d$.
By Lemma~\ref{L:change-metric}, $X$ is an $\alpha$-stable instance under $\delta_N$ for $\alpha \geq 2 + \sqrt{3}$.
We first compute the minimum spanning tree of $X$ in $O(n \polylog n)$ time under $\delta_N$ using the result of Callahan and Kosaraju~\cite{ck-faggp-1993}, and then compute $\mu(X_v, 1)$.


\subparagraph*{Data structure.}
We compute $\mu(X_v;1)$ in a bottom-up manner.
When processing a node $v$ of $\mathsf{R}$, we maintain a dynamic data structure \EMPH{$\Psi_v$} on $X_v$ from which $\mu(X_v;1)$ can be computed quickly.
The exact form of $\Psi_v$ depends on the cost function to be described below.
Before that, we analyze the running time $\tau(n)$ spent on computing every $\mu(X_v;1)$.
Let $w$ and $z$ be the two children of $v$.
Suppose we have $\Psi_w$ and $\Psi_z$ at our disposal and suppose $|X_w| \leq |X_z|$.
We insert the points of $X_w$ into $\Psi_z$ one by one and obtain $\Psi_v$ from which we compute  $\mu(X_v;1)$.
Suppose $Q(n)$ is the update time of $\Psi_v$ as well as the time taken to compute $\mu(X_v;1)$ from $\Psi_v$.
The total number of insert operations performed over all nodes of $\mathsf{R}$ is $O(n \log n)$ because we insert the points of the smaller set into the larger set at each node of $\mathsf{R}$~\cite{ht-fafnc-84, st-dsdt-83}.
Hence $\tau(n) = O(Q(n) \cdot n \log n)$.
We now describe the data structure for each specific clustering problem.

\subparagraph*{$\mathbf{1}$-mean.}

We work with the $L_2$-metric.
Here the center of a single cluster consisting of $X_v$ is the centroid $\sigma_v \coloneqq \Paren{\sum_{p \in X_v} p} / |X_v|$, and
\(
\mu(X_v;1) = \sum_{p \in X_v} \norm{p}^2 - |X_v| \cdot \norm{\sigma_v}^2.
\)
At each node $v$, we maintain $\sum_{p \in X_v} p$ and $\sum_{p \in X_v} \norm{p}^2$.
Point insertion takes $O(1)$ time so $Q(n) = 1$.


\subparagraph*{$\mathbf{1}$-center.}
As mentioned in the beginning of the section, we can work with the $L_1$-metric for $d = 2$.
We wish to find the smallest $L_1$-disc (a diamond) that contains $X_v$.
Let $e^+ = \Paren{1, 1}$ and $e^- = \Paren{-1, 1}$.
Then the radius $\rho_v$ of the smaller $L_1$-disc containing $X_v$ is

\begin{equation}
\rho_v = \frac{1}{2} \max \Set{
    \max_{p \in X_v} \seq{p, e^+} - \min_{p \in X_v} \seq{p, e^+},
    \max_{p \in X_v} \seq{p, e^-} - \min_{p \in X_v} \seq{p, e^-}
}.
\end{equation}

We maintain the four terms $\max_{p \in X_v} \seq{p, e^+}$, $\min_{p \in X_v} \seq{p, e^+}$, $\max_{p \in X_v} \seq{p, e^-}$, and $\min_{p \in X_v} \seq{p, e^-}$ at~$v$.
A point can be inserted in $O(1)$ time and $\rho_v$ can be computed from these four terms in $O(1)$ time.
Therefore, $Q(n) = O(1)$.

\subparagraph*{$\mathbf{1}$-center in higher dimensions}
\label{sp:1-center-highd}
For $d > 2$, we work with a polyhedral metric $\delta_N$ with $N = 2^{O(d)}$.
For a node $v$, we need to compute the smallest ball $B(X_v)$ under $\dist_N$ that contains $X_v$.
We need a few geometric observations to compute the smallest enclosing ball efficiently.

For each $u \in N$, let $H_u$ be the halfspace $\seq{x, u} \leq 1$, that is, the halfspace bounded by the hyperplane tangent to $S^{d - 1}$ at $u$ and containing the origin.
Define $Q \coloneqq \bigcap_{u \in N} H_u$.
A ball of radius $\lambda$ centered at $p$ under $\dist_N$ is $P + \lambda Q$.
For a vector $u \in N$, let $\overline{p}_u \coloneqq \argmax_{p \in x_v} \seq{p, u}$ be the maximal point in direction $u$.
Set $\overline{X}_v \coloneqq \Set{ \overline{p}_u \mid  u \in N }$.
The following simple lemma is the key to computing $B(X_v)$.

\begin{lemma}
\label{lem:highdim-ball}
Any $\dist_N$-ball that contains $\overline{X}_v$ also contains $X_v$.
\end{lemma}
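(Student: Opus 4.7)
The plan is to unfold the definition of a $\dist_N$-ball as a finite intersection of halfspaces and then observe that the directions defining these halfspaces are exactly the directions maximized by the points of $\overline{X}_v$.

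First I would write out what it means for a point $x$ to lie in a $\dist_N$-ball of radius $\lambda$ centered at $c$. By the definition $\dist_N(x, c) = \max_{u \in N} \seq{x - c, u}$, this containment is equivalent to the family of linear inequalities $\seq{x, u} \le \seq{c, u} + \lambda$ for every $u \in N$. In other words, any $\dist_N$-ball $B$ is the intersection $\bigcap_{u \in N} H_u^{B}$, where each $H_u^{B}$ is a halfspace with outward normal~$u$.

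Next I would exploit the choice of $\overline{p}_u$. Fix an arbitrary $u \in N$, and let $\gamma_u \coloneqq \seq{c, u} + \lambda$ be the right-hand side of the $u$-th defining inequality of $B$. The assumption that $\overline{p}_u \in B$ gives $\seq{\overline{p}_u, u} \le \gamma_u$. But $\overline{p}_u = \argmax_{p \in X_v} \seq{p, u}$, so for every $p \in X_v$ we have $\seq{p, u} \le \seq{\overline{p}_u, u} \le \gamma_u$. Ranging over all $u \in N$ shows that every $p \in X_v$ satisfies every defining inequality of $B$, so $p \in B$.

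There is no real obstacle here; the lemma is essentially a direct unpacking of the definitions, and the only observation needed is the trivial one that $\overline{p}_u$ is a worst-case witness in direction $u$. The content of the lemma is conceptual rather than technical: it reduces the smallest-enclosing-ball computation to the $|N| = 2^{O(d)}$ extremal points $\overline{X}_v$, which is what the dynamic data structure in the next step will exploit.
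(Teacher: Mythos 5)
Your proof is correct, and it is the natural argument. Note that the paper states this lemma without any proof at all (it is introduced as a ``simple lemma'' and immediately invoked), so there is no authors' proof to compare against; your write-up simply supplies the omitted one-line justification. The key observation is exactly as you identify it: a $\dist_N$-ball centered at $c$ with radius $\lambda$ is the intersection of the halfspaces $\{\seq{x,u}\le\seq{c,u}+\lambda\}$ over $u\in N$, and $\overline{p}_u$ is by construction the point of $X_v$ extremal in direction $u$, so its membership in the ball forces the same inequality for every point of $X_v$. Ranging over $u\in N$ finishes the argument. No gap.
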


By Lemma~\ref{lem:highdim-ball}, it suffices to compute $B(\overline{X}_v)$.
The next observation is that $B(\overline{X}_v)$ has a basis of size $d+1$, i.e.\ there is a subset $Y$ of $d+1$ points of $X_v$ such that $B(Y) = B(\overline{X}_v) = B(X_v)$.
One can try all possible subsets of $\overline{X}_v$ in $O(\gamma^{d+1}) = 2^{O(d^2)}$ time.%
\footnote{A more complex algorithm can compute $B(\overline{X}_v)$ in $\gamma \cdot 2^{O(d)} = 2^{O(d)}$ time, but we ignore this improvement.}
We note that $\overline{X}_v$ can be maintained under insertion in $O(\gamma) = 2^{O(d)}$ time, and we then re-compute $B(\overline{X}_v)$ in $2^{O(d^2)}$ time.
Hence, $Q(n) = O(1)$.

\subparagraph*{$\mathbf{1}$-median.}
Similar to $1$-center, we work with the polyhedral metric.
Fix a node $v$ of $T$. For a point $x \in \reals^d$, let $F_v(x) = \sum_{p \in X_v} \delta_N(x, p)$ which is a piecewise-linear function.
Our goal is to compute $\xi_v^* = \argmin_{x \in \reals^d} F_v(x)$.
Our data structure is a dynamic range-tree~\cite{agarwal1999geometric} used for orthogonal range searching that can insert a point in $O(\log n)$ time.
Using multi-dimensional parametric search~\cite{agarwal1993ray}, $\xi_v^*$ can be computed in $O(\poly\log n)$ time after each update.

\subparagraph*{$\mathbf{1}$-median in higher dimensions}
\label{sp:appendix-1median}
For simplicity, we describe the data structure for $d = 2$.
It extends to high dimensions in a straightforward manner.

\begin{figure}[htb]
\centering
\includegraphics[scale=0.75, clip]{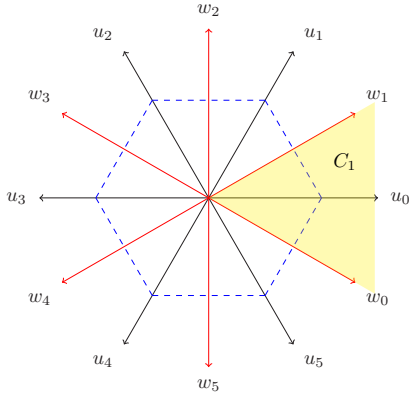}
\caption{Polyhedral metric defined by $N = \{ u_0, \dots u_5 \}$, with $C_1$ corresponding to $\Psi_1$.}
\label{F:1-med_poly}
\end{figure}

Fix a node $v$. We describe the data structure for maintaining $\xi_v^*$ under insertion of points~\footnote{We note that $\xi_v^*$ may not be unique. The minimum may be realized of a convex polygon.}
Let $N = \{ u_0, \dots , u_{r - 1} \} \subset S^1$ be the set of unit vectors that define the metric $\delta_N$.
We partition the plane into a family $\mathcal{C} = \{ C_0, \dots, C_{r - 1} \}$ of $r$ cones such that for a point $p \in C_i$, $\delta_N(p, 0) = \seq{p, u_i}$.
$C_i$ is defined by unit vectors $w_{i - 1}, w_i$, where $w_j$ is the unit vector in direction $(u_{j-1} + u_j) \rvert_2$; see Figure~\ref{F:1-med_poly}.
For a point $x \in \reals^2$ and $j < r$, let $C_j(x) = C_j + x$. Then,

\[\begin{aligned} F_v(x) &= \sum_{i = 1}^{r-1} \sum_{p \in C_i(x) \cap X_v} \delta_N(p, x) = \sum_{i = 0}^{r - 1} \sum_{p \in C_i(x) \cap X_v} \seq{p - x, u_i} \\
&= \sum_{i = 0}^{r - 1} \sum_{p \in C_i(x) \cap X_v} \seq{p, u_i} - \sum_{i = 0}^r |C_i(x) \cap X_v | \cdot \seq{x, u_i}.
\end{aligned} \]

We note that $F_v(x)$ is a piecewise-linear convex function.
We construct a separate data structure $\Psi_i$ for each $i$ so that for any $x \in \reals^2$, $\Psi_i$ computes $\alpha_i(x) = \sum_{p \in C_i(x) \cap X_v} \seq{p, u_i}$ and $\beta_i(x) = |C_i(x) \cap X_v |$.
$\Psi_i$ is basically a dynamic $2D$ range tree in which the coordinates of a point are described with $w_{i - 1}$, $w_i$ as the coordinate axes; see~\cite{de1997computational}.
$\Psi_i$ requires $O(n \log n)$ space, a query can be answered in $O(\log^2 n)$ time, and a point can be inserted in $O(\log^2 n)$ amortized time.
Hence, the overall query and update time is $O(r \log^2 n)$.
We note that $\alpha_i, \beta_i$ for $i < r$ can be used to compute the linear function  $L_{v, x}$ that represents $x$ (recall that $F_v$ is piecewise linear).
Let $\Psi = (\Psi_0, \dots , \Psi_{r - 1})$ denote the overall data structure, and let $Q_0(x)$ be the above query procedure on $\Psi$.

Using $\Psi, Q_0(\cdot)$, and multi-dimensional parametric search, we compute $\xi_v^*$ as follows.
For a line $\ell$ in $\reals^2$, let $\xi_{v, \ell}^* = \argmin_{x \in \ell} F_v(x)$.  We first describe how to compute  $\xi_{v, \ell}^*$.
Let $q$ be a point on $\ell$.
By invoking $Q_0 (q)$ on $\Psi$, we can compute $F_v(q)$ as well as $L_{v, q}$.
Using $L_{v, q}$, we can determine whether $q = \xi_{v, l}^*$, q lies to left of $\xi_{v, l}^*$, or $q$ lies to the right of $\xi_{v, l}^*$.
We refer to this as the ``decision'' procedure.
In order to compute $\xi_{v}^*$, we simulate $Q_0$ generically on $\xi_{v, l}^*$ without knowing its value and using $Q_0$ on known points as the decision procedure at each step of this generic procedure.
More precisely, at each step, a $\Psi_i$ compares the $w_{i - 1}$ or $w_i$-coordinate, say $w_i$-coordinate $\xi_{v, x}$, with a real value $\Delta$.
Let $q$ be the intersection point of  $\ell$ and the line $w_i = \Delta$.
By invoking $Q_0(q)$ on $\Psi$, we can determine in $O( r \log^2 n)$ time whether $q$ lies to the left or right of $\xi_v^*$, which in turn determines whether the $w_i$-coordinate of $\xi_{v, \ell}^*$ is smaller or greater than $\Delta$.
(If $q = \xi_{x, \ell}^*$, then we have found $\xi_{x, \ell}^*$).
Hence, each step of the decision procedure can be determined in $O(r \log^2 n)$ time.
The total time taken by the generic procedure is $O(r^2 \log^4 n)$.
The parametric search technique ensures that the generic procedure will query with $\xi_{v, \ell}^*$ as one of the steps, so the decision procedure will detect this and return $\xi_{v, \ell}^*$.

Let $Q_1 (\ell)$ denote the above procedure to compute $\xi_{v, \ell}^*$.
By simulating $Q_0$ on $\xi_v^*$ generically but now using $Q_1$ as the decision procedure, we can compute $\xi_{v}^*$ in $O(r^3 \log^6 n)$ time.
Hence, we can maintain $\xi_v^*$ in $O(\log^6 n)$ time under insertion of a point.
In higher dimensions, $Q_0$ takes $O(\log^d n)$ time in $\reals^d$. So the parametric search will take $O(\log^{d(d+1)} n)$ time to compute $\xi_v^*$.

\section{{\boldmath $k$}-Median: Single-Swap Local Search}
\label{S:local-search}

%
We customize the standard local-search framework for the $k$-clustering problem~\cite{cs-lssci-2017,ckm-lsyas-2019,frs-lsypk-2019}. In order to recover the optimal solution, we must define near-optimality more carefully.
Let $(X, \dist)$ be an instance of $\alpha$-stable $k$-median in $\reals^2$ for $\alpha > 5$. 
By Lemma~\ref{lem:discreteOpt}, it suffices to consider the \emph{discrete} $k$-median problem
In Section~\ref{SS:local-search}, we describe a simple local-search algorithm for finding the optimal clustering of $(X, \dist)$.
In Section~\ref{SS:optimality} we show that the algorithm terminates within $O(k \log (n\Delta))$ iterations.
%
We obtain the following.
\begin{theorem}
Let $(X,\dist)$ be an $\alpha$-stable instance of the $k$-median problem for some $\alpha > 5$ where $X$ is a set of $n$ points in $\reals^2$ equipped with $L_p$-metric $\dist$.
The $1$-swap local search algorithm terminates with the optimal clustering in $O(k\log(n\Delta))$ iterations.
\end{theorem}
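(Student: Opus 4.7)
The plan is to decompose the theorem into two claims: (i) every $1$-swap local optimum is the global optimum of $(X,\dist)$, and (ii) the number of improving swaps executed by the algorithm is $O(k\log(n\Delta))$. The first claim establishes that the algorithm's output is correct; the second bounds the number of iterations. I will assume throughout that we work in the discrete setting, justified by Lemma~\ref{lem:discreteOpt}.

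For claim~(i), I would follow a swap-argument template similar to the one in Cohen-Addad~\cite{cs-lssci-2017} and Friggstad \etal~\cite{fks-ealbs-2019}, but specialized to single swaps and to $\alpha > 5$. Let $S$ be a $1$-swap local optimum and $O^\ast$ be the (unique, by Lemma~\ref{L:unique-optimal}) optimal set of centers. Assume for contradiction that $S \ne O^\ast$. For each optimal center $o \in O^\ast \setminus S$, associate a ``witness'' current center $s(o) \in S \setminus O^\ast$ such that swapping $s(o)$ with $o$ can be ``paid for'' by reassigning only points in the optimal cluster of $o$. The key geometric input is $\alpha$-center proximity (and the intra/inter distance bound of Lemma~\ref{L:intra-inter}): because $\alpha > 5$, for every point $p$ in the optimal cluster $X_o$ of $o$, one has $\dist(p,o) < \tfrac{1}{\alpha-1}\dist(p,c)$ for every other optimal center $c$, and this slack is large enough so that the net change in cost incurred by (a)~reassigning the $X_o$ points to $o$ and (b)~transferring the points of $s(o)$'s $S$-cluster to their nearest remaining $S$-center is strictly negative, unless $X_o$ is already served by $o$ in $S$. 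Summing over all $o \in O^\ast \setminus S$, at least one single swap must strictly decrease the cost, contradicting local optimality. The condition $\alpha > 5$ enters exactly because $2/(\alpha - 3) < 1$, which is the threshold from the prior framework that allows a one-swap witness to suffice.

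For claim~(ii), the iteration bound, I would prove the standard ``one-swap improvement lemma'': if the current solution $S$ has cost $C(S) > C(O^\ast)$, then some single swap improves cost by at least $\Omega((C(S)-C(O^\ast))/k)$. The argument averages the cost improvements of the $k$ candidate swaps produced in claim~(i): their total guaranteed improvement is at least $C(S) - C(O^\ast)$ (up to a constant depending on $\alpha$), so at least one swap contributes an $\Omega(1/k)$ fraction. Consequently each iteration satisfies $C(S_{t+1}) - C(O^\ast) \le (1 - \Omega(1/k))\cdot (C(S_t) - C(O^\ast))$. The initial excess cost $C(S_0) - C(O^\ast)$ is at most $n \cdot \text{diam}(X)$, and the smallest possible nonzero cost gap between two distinct $k$-subsets is $\Omega(\text{smallest pairwise distance})$; hence the ratio of initial excess to minimum gap is $O(n\Delta)$. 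Therefore after $O(k \log(n\Delta))$ iterations the excess cost drops below this gap, forcing $S = O^\ast$.

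The main obstacle I anticipate is the precise accounting in the one-swap improvement lemma. Previous work often uses $t$-swap arguments for $t \ge 2$ because a single swap can disturb many clusters simultaneously. Handling the reassignment of all points in the $S$-cluster of $s(o)$ after $s(o)$ is removed (these points may end up quite far from their nearest remaining $S$-center) is the delicate step. The trick will be to charge the reassignment cost against the $(\alpha - 1)$-factor slack between intra-cluster and inter-cluster distances granted by Lemma~\ref{L:intra-inter}, which for $\alpha > 5$ gives sufficient margin that the bookkeeping closes without inflating the required number of swaps beyond one. Once this improvement lemma is in hand, both claim~(i) (strict decrease at any non-optimal $S$) and the multiplicative decrement used in claim~(ii) follow from the same accounting.
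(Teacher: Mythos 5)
Your decomposition into (i) optimality of $1$-swap local optima and (ii) geometric decrease of the excess cost matches the structure of the paper's argument, and you correctly pinpoint $\alpha \geq 5$ as the threshold at which Cohen-Addad's $\lceil 2/(\alpha-3)\rceil$-swap bound degenerates to a single swap. But there is a genuine gap in the step you yourself flag as the main obstacle, and the tool you reach for there --- Lemma~\ref{L:intra-inter} --- is not the one the argument needs.

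The crux is your assertion that the candidate swaps' total guaranteed improvement is ``at least $\cost(X,S)-\cost(X,O^*)$ (up to a constant depending on $\alpha$).'' The Arya-style accounting over the $k$ candidate swaps (with at most two swaps removing any fixed $S$-center) in fact yields a total improvement of only $\cost(X,S)-\cost(X,O^*)-4\,\cost^*(X_{00})$, where $X_{00}$ is the set of points whose current $S$-center and whose optimal center are both outside $S\cap O^*$; this quantity can be negative, so no constant-factor slop rescues the averaging. The paper's missing ingredient --- which your proposal does not contain --- is the notion of a \emph{$C$-good} clustering, $\cost(X,S)\le\cost(X,O^*)+C\cdot\cost^*(X_{00})$, paired with Lemma~\ref{L:ls-good-opt}: define a perturbed metric that multiplies every distance by $\alpha$ except each point's distance to its current $S$-center; $\alpha$-stability then forces any $C$-good $S$ with $C\le\alpha-1$ to coincide with $O^*$. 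Only with that lemma in hand can you argue that a non-optimal $S$ is not $C$-good, so $\cost(X,S)-\cost(X,O^*) > (\alpha-1)\cost^*(X_{00}) > 4\cost^*(X_{00})$, which makes the candidate-swap total positive and $\Omega\bigl(\cost(X,S)-\cost(X,O^*)\bigr)$, delivering both the local-optimality claim and the $1/k$-fraction improvement behind the $O(k\log(n\Delta))$ bound. Your proposed alternative of charging against the $(\alpha-1)$-factor intra/inter distance slack from Lemma~\ref{L:intra-inter} aims at the same deficit but is the wrong instrument: the paper's local-search analysis never invokes Lemma~\ref{L:intra-inter}, because the slack it needs lives on aggregated cluster \emph{costs}, not on pairwise distances, and it is manufactured by the stability perturbation, not by geometric separation. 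Without the $C$-goodness/perturbation argument, neither your claim~(i) nor the improvement lemma underlying claim~(ii) actually closes.
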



\subparagraph*{Local-search algorithm.}
\label{SS:local-search}
%
The local-search algorithm maintains a $k$-clustering induced by a set $S$ of $k$ cluster centers. At each step, it finds a pair of points $x \in X$ and $y \in S$ such that $\cost(X, S + x - y)$ is minimized.
If $\cost(X, S + x - y) \geq \cost(X, S)$, it stops and returns the $k$-clustering induced by $S$.
Otherwise it replaces $S$ with $S + x - y$ and repeats the above step. The pair $(x, y)$ will be referred to as a \EMPH{$1$-swap}.

\subparagraph*{Local-search analysis.}
\label{SS:optimality}
%
%
The high-level structure of our analysis follows Friggstad \etal~\cite{frs-lsypk-2019}, however new ideas are needed for $1$-swap.
In this subsection, we denote a $k$-clustering by the set of its cluster centers.
Let $S$ be a fixed $k$-clustering, and let $O$ be the optimal clustering.
For a subset $Y \subseteq X$, we use \EMPH{$\cost(Y)$} and \EMPH{$\cost^*(Y)$} to denote $\cost(Y, S)$ and $\cost(Y, O)$, respectively.
Similarly, for a point $p \in X$, we use \EMPH{$\nn(p)$} and \EMPH{$\nn^*(p)$} to denote the nearest neighbor of $p$ in $S$ and in $O$, respectively; define \EMPH{$\dist(p)$} to be $\dist(p, S)$ and \EMPH{$\dist^*(p)$} to be $\dist(p, O)$. We partition $X$ into four subsets as follows:


%

\begin{itemize}\itemsep=0pt
\item $\EMPH{$X_{00}$} \coloneqq \Set{\big. p \in X \mid \nn(p) \in S \setminus O, \nn^*(p) \in O \setminus S }$;
\item $\EMPH{$X_{01}$} \coloneqq \Set{\big. p \in X \mid \nn(p) \in S \setminus O, \nn^*(p) \in S\cap O }$;
\item $\EMPH{$X_{10}$} \coloneqq \Set{\big. p \in X \mid \nn(p) \in S \cap O, \nn^*(p) \in O\setminus S }$;
\item $\EMPH{$X_{11}$} \coloneqq \Set{\big. p \in X \mid \nn(p) \in S \cap O, \nn^*(p) \in S\cap O }$.
\end{itemize}
Observe that for any point $p$ in $X_{11}$, $\nn(p) = \nn^*(p)$ and $\cost(p) = \cost^*(p)$;
for any point $p$ in $X_{01}$, one has $\cost(p) \le \cost^*(p)$;
and for any point $p$ in $X_{10}$, one has $\cost(p) \ge \cost^*(p)$.
Costs $\dist(p)$ and $\dist^*(p)$ are not directly comparable for point $p$ in $X_{00}$.
A $k$-clustering $S$ is \EMPH{$C$-good} for some parameter $C \geq 0$ if $\cost(X) \le \cost^*(X) + C \cdot \cost^*(X_{00})$.

%

\begin{lemma}
    \label{L:ls-good-opt}
Any $C$-good clustering $S$ for an $\alpha$-stable clustering instance $(X,\dist, \cost)$ must be optimal for $\alpha \ge C+1$.
\end{lemma}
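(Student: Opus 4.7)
My plan is to argue by contradiction using $\alpha$-stability together with an explicit perturbation of the distance function tailored to $O \setminus S$. Suppose $S$ is $C$-good but $S \neq O$. Since $\alpha \ge C+1 > 1$, Lemma~\ref{L:unique-optimal} gives that $O$ is the unique optimum of $(X, \dist)$, so $\cost(X) > \cost^*(X)$, and $C$-goodness already forces $\cost^*(X_{00}) > 0$.

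The key step is to construct the asymmetric $\alpha$-perturbation $\tilde\dist$ given by $\tilde\dist(p, q) \coloneqq \alpha \cdot \dist(p, q)$ when $q \in O \setminus S$, and $\tilde\dist(p, q) \coloneqq \dist(p, q)$ otherwise. This is a valid $\alpha$-perturbation in the (not-necessarily-symmetric) sense of Section~\ref{S:prelim}. Because no center of $S$ lies in $O \setminus S$, one immediately has $\cost_{\tilde\dist}(X, S) = \cost(X)$. On the other hand, $\alpha$-center proximity (a consequence of $\alpha$-stability recorded in Section~\ref{S:prelim}) shows that for every $p$ with $\nn^*(p) \in O \setminus S$, i.e.\ for every $p \in X_{00} \cup X_{10}$, the point $\nn^*(p)$ is still $p$'s nearest center of $O$ under $\tilde\dist$, because any other candidate $c \in O$ satisfies $\tilde\dist(p,c) > \alpha \dist^*(p) = \tilde\dist(p, \nn^*(p))$. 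A short calculation then yields
\[
\cost_{\tilde\dist}(X, O) \;=\; \cost^*(X) + (\alpha - 1)\bigl[\cost^*(X_{00}) + \cost^*(X_{10})\bigr].
\]

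By $\alpha$-stability the clustering $O$ is still optimal under $\tilde\dist$, so $\cost_{\tilde\dist}(X, O) \le \cost_{\tilde\dist}(X, S) = \cost(X)$, which rearranges to $\cost(X) - \cost^*(X) \ge (\alpha-1) \cdot \cost^*(X_{00})$. Chaining this with the $C$-goodness hypothesis gives $(\alpha - 1)\cdot \cost^*(X_{00}) \le C \cdot \cost^*(X_{00})$; combined with $\alpha \ge C+1$ and $\cost^*(X_{00}) > 0$ this forces equality throughout, and in particular drives $\cost^*(X_{00}) = 0$, contradicting the earlier deduction that $\cost^*(X_{00}) > 0$.

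The main subtle point I expect will be handling the boundary case $\alpha = C+1$, where the chain of inequalities above only gives equality rather than a strict contradiction. I plan to address this by noting that equality would make $S$ an additional optimum of $(X, \tilde\dist)$; then a uniqueness argument analogous to the proof of Lemma~\ref{L:unique-optimal} (or a secondary perturbation by a slightly smaller factor $\alpha' \in (1, \alpha)$, which is also a valid $\alpha$-perturbation) converts the weak inequality into the required strict one and closes the proof.
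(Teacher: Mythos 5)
Your overall strategy---define an asymmetric $\alpha$-perturbation $\tilde\dist$ so that $\tilde\cost(X,S) = \cost(X,S)$, compute $\tilde\cost(X,O)$ explicitly, and play these against the $C$-goodness hypothesis and stability---is the same as the paper's. Your specific perturbation (scale only distances to centers in $O \setminus S$) differs from the paper's (scale $\tilde\dist(p',p)$ for every $p \ne \nn(p')$), and yours is arguably cleaner: using center proximity to verify that $\nn^*(p)$ remains the nearest $O$-center under $\tilde\dist$ is a nice way to make the expression for $\tilde\cost(X,O)$ airtight, a point the paper glosses over.

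However, there is a genuine flaw in how you close the argument. After deriving $(\alpha-1)\cost^*(X_{00}) \le C\cdot\cost^*(X_{00})$ you write that combining with $\alpha \ge C+1$ and $\cost^*(X_{00})>0$ ``forces equality throughout, and in particular drives $\cost^*(X_{00})=0$.'' That last step does not follow: when $\alpha = C+1$ the two sides are identically equal for any value of $\cost^*(X_{00})$, and nothing forces it to vanish, so you have no contradiction. Your two proposed patches also do not close it. The secondary perturbation by $\alpha' < \alpha$ only yields the weaker inequality $(\alpha'-1)\cost^*(X_{00}) \le C\cost^*(X_{00})$, which is even less useful. And invoking uniqueness of the optimum for $(X,\tilde\dist)$ \`a la Lemma~\ref{L:unique-optimal} requires that $(X,\tilde\dist)$ itself be stable with parameter strictly greater than $1$, which you have not established (and cannot get from Lemma~\ref{L:change-metric}, since your $\tilde\dist$ is not a metric).

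The fix is to not aim for a contradiction via $\cost^*(X_{00})=0$ at all. Your own inequalities already give $\tilde\cost(X,S) = \cost(X,S) \le \cost^*(X) + C\cdot\cost^*(X_{00}) \le \cost^*(X) + (\alpha-1)\bigl(\cost^*(X_{00}) + \cost^*(X_{10})\bigr) = \tilde\cost(X,O)$ for every $\alpha \ge C+1$, with no case split. Combined with stability (which gives $\tilde\cost(X,O) \le \tilde\cost(X,S)$) this shows $S$ is an optimal clustering of $(X,\tilde\dist)$, and the paper then asserts $S = O$. This is exactly what the paper's proof does, and it handles the boundary $\alpha = C+1$ uniformly. (Whether the final assertion $S=O$ itself deserves a further word on uniqueness under $\tilde\dist$ is a separate, smaller question that the paper also does not address explicitly.)
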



\begin{proof}
Define a perturbed distance function $\EMPH{$\tilde\dist$}\colon X \times X \to \reals_{\ge 0}$ with respect to the given clustering $S$ as follows:
\[
\tilde\dist(p',p) \coloneqq \begin{cases}
\alpha \cdot \dist(p',p) & \text{if $p\neq \nn(p')$,} \\
\dist(p',p) & \text{otherwise.}
\end{cases}
\]


Note that $\tilde\dist$ is not symmetric. Let \EMPH{$\tilde\cost(\cdot,\cdot)$} denote the cost function under the perturbed distance function $\tilde\dist$.
The optimal clustering under perturbed cost function is the same as the original optimal clustering $O$ by the stability assumption.  Since $\nn(p) = \nn^*(p)$ if and only if $p \in X_{11}$, the cost of $O$ under the perturbed cost can be written as:
\[
\tilde\cost(X,O) =
\alpha \cdot \cost(X_{00},O) +
\alpha \cdot \cost(X_{01},O) +
\alpha \cdot \cost(X_{10},O) +
\cost(X_{11},O).
\]

By definition of perturbed distance $\tilde\dist$, $\tilde\cost(X,S) = \cost(X,S)$.
Now, by the assumption that clustering $S$ is $C$-good,
\begin{align*}
\tilde\cost(X,S) = \cost(X,S) &\le \cost(X,O) + C \cdot \cost(X_{00},O) \\
&\le
(C+1) \cdot \cost(X_{00},O) +
\cost(X_{01},O) +
\cost(X_{10},O) +
\cost(X_{11},O) \\
&\le
\tilde\cost(X,O);
\end{align*}
the last inequality follows by taking $\alpha \ge C+1$.
This implies that $S$ is an optimal clustering for $(X,\tilde\dist)$, and thus is equal to $O$.
\end{proof}

%
%




\begin{figure}[t]
\centering
\includegraphics[width=0.4\textwidth,scale=0.9]{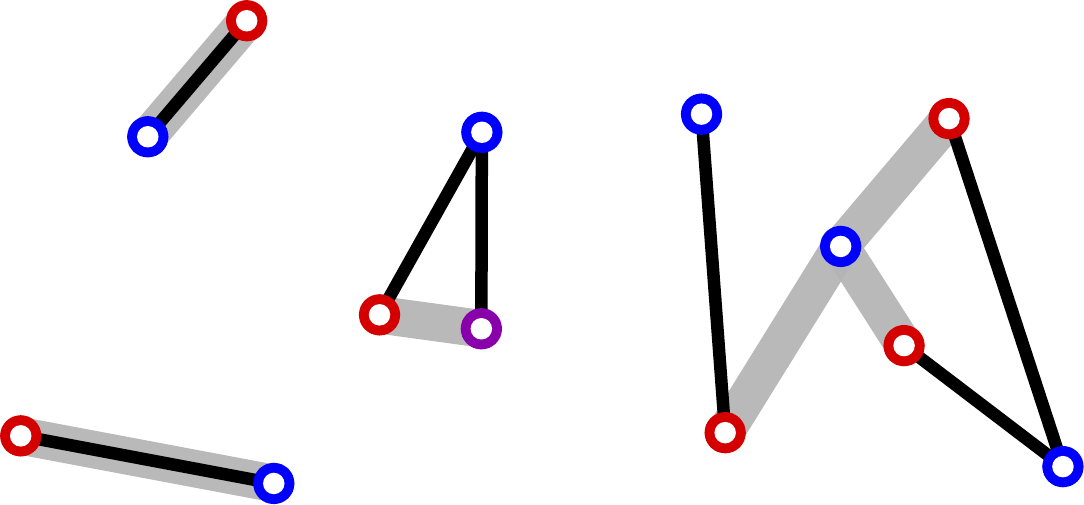}
\caption{Illustration of candidate swaps $\mathcal{S}$ in $\reals^2$.  The blue dots belong to set $S$, the red dots belong to set $O$; the only purple dot is in $S \cap O$.  The thick gray segments indicate pairs inside the stars; each star has exact one blue dot as its center.  The black pairs are the candidate swaps.  Notice that the partitions of $S$ and $O$ form connected components.}
\label{F:candidate-swaps}
\end{figure}

Next, we prove a lower bound on the improvement in the cost of a clustering that is not $C$-good after performing a $1$-swap.
Following Arya~\etal~\cite{agk+-lshkf-2004}, define the set of \EMPH{candidate swaps $\mathcal{S}$} as follows:
For each center $i$ in $S$, consider the \EMPH{star $\Sigma_i$} centered at $i$ defined as the collection of pairs
\(
\Sigma_i \coloneqq \set{ (i,j) \in S \times O \mid \nn(j) = i}.
\)
Denote \EMPH{$\Center(j)$} to be the center of the star where $j$ belongs; in other words, $\Center(j) = i$ if $j$ belongs to $\Sigma_i$.
%

For $i \in S$, let $\EMPH{$O_i$} \coloneqq \{ j \in O \ | \ \Center(j) = i \}$ be the set of centers of $O$ in star~$\Sigma_i$.
If $|O_i| = 1$, then we add the only pair $(i, j) \in \Sigma_i$ to the candidate set $\mathcal{S}$.
Let $\EMPH{$S_\varnothing$} \coloneqq \{ i\in S\  |\  O_i = \varnothing \}$.
Let $\EMPH{$O_{>1}$}$ contain centers in $O$ that belong to a star of size greater than $1$.
We pick $|O_{>1}|$ pairs from $S_\varnothing \times O_{>1}$ such that each point of $O_{>1}$ is matched only once and each point of $S_\varnothing$ is matched at most twice and add them to $\mathcal{S}$; this is feasible because $|S_\varnothing| \geq |O_{>1}|/2$.
Since each center in~$O$ belongs to exactly one pair of $\mathcal{S}$, $|\mathcal{S}| = k$.
By construction, if $|\Sigma_i| \geq 2$, then $i$ does not belong to any candidate swap.
See Figure~\ref{F:candidate-swaps}.

\begin{lemma}
\label{L:1swap-candidate}
For each point $p$ in $X_{01}$, $X_{10}$, or $X_{11}$, the set of candidate swaps $\mathcal{S}$ satisfies
\begin{align}
\label{Eq:avg1}
\sum_{(i,j) \in \mathcal{S}} (\dist(p) - \dist'(p)) \ge  \dist(p) - \dist^*(p);
\end{align}
and for each point $p$ in $X_{00}$, the set of candidate swaps $\mathcal{S}$ satisfies
\begin{align}
\label{Eq:avg2}
\sum_{(i,j) \in \mathcal{S}} (\dist(p) - \dist'(p)) \ge  (\dist(p) - \dist^*(p)) - 4 \dist^*(p),
\end{align}
where $\cost'$ is the cost function on $X$ defined with respect to $S' \coloneqq S - i + j$, and $\dist'(p)$ is the distance between $p$ and its nearest neighbor in $S'$.
\end{lemma}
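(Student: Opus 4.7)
The plan is to prove both bounds by upper-bounding $\dist'(p)$ in each candidate swap $(i,j) \in \mathcal{S}$ through a routing argument: for each swap I will choose a witness $c \in S' = S - i + j$ and use $\dist'(p) \le \dist(p, c)$, with $\nn(p)$, $\nn^*(p)$, or $\nn(\nn^*(p))$ as the natural choices. Two structural observations about $\mathcal{S}$ will be used throughout. First, any $c \in S \cap O$ is never swapped out non-trivially: because $\nn(c) = c$ places $c$ in its own star $\Sigma_c$, either $(c,c)$ is the unique candidate with first coordinate $c$ (a trivial no-op swap) or $|O_c| > 1$ and $c$ never appears as a first coordinate at all, so $S \cap O \subseteq S'$ for every non-trivial candidate swap. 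Second, for any $j \in O$, $\nn(j) \notin S_\varnothing$ because $j \in O_{\nn(j)}$, which lets us reroute around bad swaps in the $X_{00}$ analysis.

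The case analysis is anchored on the \emph{good swap} $(i^*, j^*) \in \mathcal{S}$ for each $p$, defined as the unique pair containing $j^* = \nn^*(p)$ (it exists because each center of $O$ lies in exactly one pair). Since $j^* \in S'$ in this swap, $\dist'(p) \le \dist^*(p)$, contributing at least $\dist(p) - \dist^*(p)$ to the sum. For every other swap $(i, j)$ with $i \ne \nn(p)$, the center $\nn(p)$ survives in $S'$ and so $\dist'(p) \le \dist(p)$, giving a non-negative contribution. The only troublesome swaps are the ``bad'' ones with $i = \nn(p)$ and $j \ne j^*$, and by construction these can arise only when $\nn(p) \in S_\varnothing$ (at most two such swaps) or when $|O_{\nn(p)}| = 1$ (exactly one). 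For $p \in X_{11}$ and $p \in X_{10}$ the first structural observation forces $\nn(p) \in S \cap O$ to remain in $S'$, so the bad set is empty and the good swap alone delivers the required $\dist(p) - \dist^*(p)$. For $p \in X_{01}$, since $\nn^*(p) \in S \cap O \subseteq S'$ one still has $\dist'(p) \le \dist^*(p)$ in any bad swap; a careful accounting of the overlap between the good swap and $\nn(p)$'s at-most-two pairings then yields the stated bound.

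The main technical content, which I expect to be the hardest step to execute cleanly, is the $X_{00}$ case: neither $\nn(p)$ nor $\nn^*(p)$ is automatically available in $S'$ during a bad swap. The plan is to reroute through $\nn(j^*)$: by the second structural observation $\nn(j^*) \notin S_\varnothing$, so $\nn(j^*) \ne i$ whenever the bad swap has $i \in S_\varnothing$, and hence $\nn(j^*) \in S'$. Two triangle inequalities then give
\[
\dist'(p) \le \dist(p, \nn(j^*)) \le \dist(p, j^*) + \dist(j^*, \nn(j^*)) \le \dist^*(p) + \dist(j^*, \nn(p)) \le 2\dist^*(p) + \dist(p),
\]
using that $\nn(j^*)$ is $j^*$'s nearest $S$-point (so $\dist(j^*, \nn(j^*)) \le \dist(j^*, \nn(p))$) and the triangle inequality on $p$, $j^*$, $\nn(p)$. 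Each bad swap therefore contributes at least $-2\dist^*(p)$; combining the good swap's contribution of $\dist(p) - \dist^*(p)$ with at most two bad-swap contributions of $-2\dist^*(p)$ each, and zero from the remaining swaps, yields the claimed bound $(\dist(p) - \dist^*(p)) - 4\dist^*(p)$.
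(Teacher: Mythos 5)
Your proposal tracks the paper's proof almost step for step: the same partition into ``good swap'' (where $j^* = \nn^*(p)$ is brought in), ``bad swaps'' (where $i = \nn(p)$ is dropped but $j \neq \nn^*(p)$), and ``neutral swaps''; the same observation that centers in $S \cap O$ are never dropped; the same bound of at most two bad swaps; and, most importantly, the same routing argument for $X_{00}$ through $\nn(\nn^*(p)) = \Center(\nn^*(p))$, using the two triangle inequalities $\dist(j^*, \nn(j^*)) \le \dist(j^*, \nn(p))$ and $\dist(j^*, \nn(p)) \le \dist^*(p) + \dist(p)$ to get the per-swap loss of $-2\dist^*(p)$. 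The final arithmetic $(\dist(p) - \dist^*(p)) + 2(-2\dist^*(p))$ is also identical. So this is the paper's approach.

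One place where your write-up is slightly less complete than the paper's: to guarantee $\nn(j^*) \in S'$ in a bad swap $(i, j)$ with $i = \nn(p)$ and $j \neq j^*$, you argue only via $\nn(j^*) \notin S_\varnothing$, which rules out $\nn(j^*) = i$ when $i \in S_\varnothing$. But bad swaps can also occur when $\abs{O_{\nn(p)}} = 1$; in that case $i \notin S_\varnothing$ and your stated reason does not apply. The gap is easy to fill --- if $\abs{O_i} = 1$ then $O_i = \{j\}$, and since $j \neq j^*$ while $j^* \in O_{\nn(j^*)}$, we get $\nn(j^*) \neq i$ --- but you should say it. The paper handles both subcases in one stroke by arguing directly that $i \neq \Center(\nn^*(p))$: if $i$ were $\Center(\nn^*(p))$ and appears as the first coordinate of a candidate swap, then $\Sigma_i$ must be a singleton containing $(i, \nn^*(p))$, contradicting $j \neq \nn^*(p)$.

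Finally, for $X_{01}$ you wave at ``a careful accounting of the overlap,'' but you do not execute it; note that the paper's own treatment of $X_{01}$ is similarly terse (it observes $\dist(p) \le \dist^*(p)$ and that $\nn^*(p) \in S'$ whenever $\nn(p)$ is dropped, and stops). If you are going to present this as a complete proof you would need to actually close that case; be aware it is the subtlest of the three because the good swap's contribution $\dist(p) - \dist^*(p)$ is nonpositive for $X_{01}$, so stacking two bad swaps on top of it requires more than the naive per-swap bound.
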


\begin{proof}
For point $p$ in $X_{11}$, both $\nn(p)$ and $\nn^*(p)$ are in $S'$, so $\delta'(p) = \delta(p) = \delta^*(p)$.
%
For point $p$ in $X_{01}$, $\dist(p) \le \dist^*(p)$; when $\nn(p)$ is being swapped out by some in 1-swap $S'$, $\nn^*(p)$ must be in $S'$.
%
For point $p$ in $X_{10}$,  $\dist(p) \ge \dist^*(p)$; center $\nn(p)$ will never be swapped out by any 1-swap in $\mathcal{S}$, so $\delta'(p) \leq \delta(p)$. By construction of $\mathcal{S}$, there is exactly one choice of $S'$ that swaps $\nn^*(p)$ in; for that particular swap we have $\dist'(p) = \dist^*(p)$.
In all three cases one has inequality~(\ref{Eq:avg1}).
Our final goal is to prove inequality~(\ref{Eq:avg2}).
Consider a swap $(i, j)$ in $\mathcal{S}$. There are three cases to consider:
\begin{itemize}\itemsep=0pt
    \item \emph{$j = \nn^*(p)$}.
    There is exactly one swap for which $j = \nn^*(p)$. In this case $\dist(p) \leq \dist^*(p)$, therefore $\dist(p) - \dist'(p) \geq \dist(p) - \dist^*(p)$.
    \item \emph{$j \neq \nn^*(p)$ and $i \neq \nn(p)$}.
    Since $\nn(p) \in S'$, $\dist'(p) \leq \dist(p)$. Therefore $\dist(p) - \dist'(p) \geq 0$.
    \item \emph{$j \neq \nn^*(p)$ and $i = \nn(p)$}.
    By construction, there are most two swaps in $\mathcal{S}$ that may swap out $\nn(p)$.
    We claim that $i \neq \Center(\nn^*(p))$. Indeed, if $i = \Center(\nn^*(p))$, then by construction, $\Sigma_i = \{ (i, \nn^*(p))\}$ because the center of star of size greater than one is never added to a candidate swap.
    But this contradicts the assumption that $j \neq \nn^*(p)$.
    The claim implies that $\Center(\nn^*(p)) \in S'$ and thus $\dist'(p) \leq \dist(p, \Center(\nn^*(p)))$.
    We obtain a bound on $\dist(p, \Center(\nn^*(p)))$ as follows:
    \begin{align*}
   \dist(p, \Center(\nn^*(p))) &\le \dist(p, \nn^*(p)) + \dist( \nn^*(p), \Center(\nn^*(p)) ) \\
   &\le \dist^*(p) + \dist(\nn^*(p), \nn(p)) \le \dist^*(p) + (\dist^*(p) + \dist(p)) = \dist(p) + 2\dist^*(p).
   \end{align*}
    Therefore, $\dist(p) - \dist'(p) \geq \dist(p) - \dist(p, \Center(\nn^*(p)))$. Putting everything together, we obtain:
    \begin{align*}
        \sum_{S' \in \mathcal{S}} (\dist(p) - \dist'(p)) \ge (\dist(p) - \dist^*(p)) + 0 + 2(\dist(p) - \dist(p) - 2\dist^*(p)) = \dist(p) - 5\dist^*(p). 
    \end{align*}
\end{itemize}
\end{proof}

Using Lemma~\ref{L:1swap-candidate}, we can prove the following.
\begin{lemma}
\label{L:1swap-improv}
Let $S$ be a $k$-clustering of $(X,\dist)$ that is not $C$-good for some fixed constant $C > 4 + \e$ with arbitrarily small $\e > 0$.
There is always a $1$-swap $S'$ such that
$\cost'(X) - \cost^*(X) \le (1-\e/(1+\e)k) \cdot (\cost(X) - \cost^*(X))$,
where $\cost'$ is the cost function defined with respect to $S'$.
\end{lemma}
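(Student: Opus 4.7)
The plan is to combine the per-point bounds in Lemma~\ref{L:1swap-candidate} by summing over all of $X$, then apply an averaging argument over the $k$ candidate swaps in $\mathcal{S}$.

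First I would sum the two inequalities of Lemma~\ref{L:1swap-candidate} over all $p \in X$, using the partition $X = X_{00} \sqcup X_{01} \sqcup X_{10} \sqcup X_{11}$. Since $\sum_{p \in X}(\dist(p) - \dist'(p)) = \cost(X) - \cost'(X)$ and $\sum_p \dist^*(p) = \cost^*(X)$, the aggregated bound reads
\[
\sum_{(i,j)\in\mathcal{S}} \bigl(\cost(X) - \cost'(X)\bigr) \;\ge\; \bigl(\cost(X) - \cost^*(X)\bigr) - 4\cdot \cost^*(X_{00}).
\]
Here the $-4\cost^*(X_{00})$ slack comes only from the $X_{00}$ contribution, which is the reason the $C$-goodness hypothesis is needed to bound $\cost^*(X_{00})$.

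Next I would observe that $|\mathcal{S}| = k$ by construction, so by averaging there must exist at least one candidate swap $(i,j)\in\mathcal{S}$ whose associated $1$-swap $S' = S - i + j$ satisfies
\[
\cost(X) - \cost'(X) \;\ge\; \tfrac{1}{k}\bigl[(\cost(X) - \cost^*(X)) - 4\cdot \cost^*(X_{00})\bigr].
\]
Since $S$ is not $C$-good, by definition $\cost(X) - \cost^*(X) > C \cdot \cost^*(X_{00})$, so $\cost^*(X_{00}) < (\cost(X) - \cost^*(X))/C$. Substituting, the right-hand side is at least $\tfrac{1}{k}\bigl(1 - \tfrac{4}{C}\bigr)\bigl(\cost(X) - \cost^*(X)\bigr)$. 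Rearranging into the form stated (using $C > 4 + \eps$ so that $1 - 4/C$ has the required positive lower bound in terms of $\eps$) yields
\[
\cost'(X) - \cost^*(X) \;\le\; \Bigl(1 - \tfrac{1 - 4/C}{k}\Bigr) \bigl(\cost(X) - \cost^*(X)\bigr),
\]
which is the desired improvement bound after plugging in the value of $C$.

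The only delicate point is the averaging step together with the correct use of the $C$-goodness assumption: we need the net deficit $-4\cost^*(X_{00})$ (coming solely from points in $X_{00}$, whose optimal-vs-current distances are not directly comparable) to be dominated by the gap $\cost(X) - \cost^*(X)$, and this is precisely what failing to be $C$-good guarantees. Once that is in place, the rest is just arithmetic; in particular no further geometric argument is needed beyond Lemma~\ref{L:1swap-candidate} and the construction of $\mathcal{S}$ (which already ensures each $S$-center is swapped out at most twice and each $O$-center swapped in exactly once).
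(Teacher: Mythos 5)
Your proof takes essentially the same route as the paper's: sum the two inequalities of Lemma~\ref{L:1swap-candidate} over all $p \in X$, swap the order of summation, average over the $k$ candidate swaps to extract one improving swap, and then invoke the failure of $C$-goodness to absorb the $-4\cost^*(X_{00})$ deficit. The paper's proof does the same averaging but phrases the final arithmetic via an auxiliary parameter $M$ (multiplying numerator and denominator and regrouping); your direct substitution $\cost^*(X_{00}) < (\cost(X)-\cost^*(X))/C$ is the cleaner presentation.

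One small point worth noting: carried through precisely with $C > 4+\e$, your bound is $\cost'(X)-\cost^*(X) \le \bigl(1 - \tfrac{\e}{(4+\e)k}\bigr)\bigl(\cost(X)-\cost^*(X)\bigr)$, since $1 - 4/C > \e/(4+\e)$. This is weaker than the stated $\bigl(1-\tfrac{\e}{(1+\e)k}\bigr)$. However, the paper's own derivation with its choice $M > 1+1/\e$ does not make the $\cost^*(X_{00})$-term nonnegative either (that requires $M \ge 1 + 4/\e$), and taking $M = 1+4/\e$ yields the same $\e/((4+\e)k)$ factor you obtained. So the discrepancy is a looseness in the lemma's stated constant rather than a gap in your argument; the qualitative conclusion (multiplicative improvement of $\Omega(\e/k)$ per step, hence $O(k\log(n\Delta)/\e)$ iterations) is unaffected.
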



\begin{proof}
By Lemma~\ref{L:1swap-candidate} one has $\cost(X) - \cost'(X) \ge (\cost(X) - \cost^*(X) - \Psi(X_{00})) / k$ for some 1-swap $S'$ and its corresponding cost function $\cost'(\cdot)$.
Since $S$ is not $C$-good, $\cost(X) - \cost^*(X) > C\cdot \cost^*(X_{00})$.
Rearranging and plugging the definition of $\Psi(\cdot)$, we have
\begin{align*}
\cost'(X) - \cost^*(X)
&\le \cost(X) - \cost^*(X) - (\cost(X) - \cost^*(X) - \Psi(X_{00})) / k \\
&\le \cost(X) - \cost^*(X) - \Paren{\cost(X) - \cost^*(X) - 4\cdot \cost^*(X_{00})} / k \\
&\le \cost(X) - \cost^*(X) - \Paren{\cost(X) - \cost^*(X) + (M-1)\cdot (\cost(X) - \cost^*(X)) - 4M\cdot \cost^*(X_{00})} / Mk \\
&\le \Paren{1-\frac{\e}{(1+\e)k}} \cdot (\cost(X) - \cost^*(X)),
\end{align*}
where the last inequality holds by taking $M$ to be arbitrarily large (say $M > 1+1/\e$).
\end{proof}

\section{Efficient Implementation of Local Search}
\label{S:efficient-cost}

We describe an efficient implementation of each step of the local-search algorithm in this section.
%
By Lemma~\ref{L:change-metric}, it suffices to implement the algorithm using a polyhedral metric $\dist_N$.
We show that each step of $1$-swap can be implemented in $O(nk^{2d - 1} \polylog n)$ time under the assumption that $\alpha > 5$.
%
We obtain the following:
\begin{theorem}
    \label{thm:cost-highdim}
    Let $(X, \dist)$ be an $\alpha$-stable instance of the $k$-median problem where $X \subset \reals^d$ and $\dist$ is the Euclidean metric.
    For $\alpha > 5$, the $1$-swap local search algorithm computes the optimal $k$-clustering of $(X, \dist)$ in $O(n k^{2d - 1} \polylog n)$ time.
\end{theorem}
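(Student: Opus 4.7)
The plan is to reduce the algorithm to a polyhedral metric via Lemma~\ref{L:change-metric} and then build range-tree-based data structures under this metric so that each iteration of $1$-swap local search runs in $\tilde O(nk^{2d-2})$ time. First I would choose $N$ to consist of $\gamma = O(1)$ unit vectors (a constant depending only on $d$) so that $\dist_N$ is a $(1+\eps)$-approximation of the Euclidean metric with $(1+\eps) < \alpha/5$; by Lemma~\ref{L:change-metric} the instance is then $(\alpha/(1+\eps))$-stable under $\dist_N$ with stability strictly greater than $5$, so the local-search analysis of Section~\ref{S:local-search} still applies and terminates in $O(k \log(n \Delta))$ iterations. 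Under $\dist_N$ every distance comparison reduces to an interaction with $O(1)$ families of parallel hyperplanes, which is precisely the form that $d$-dimensional range trees handle efficiently.

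Within each iteration I would maintain the current center set $S$, and for every $p \in X$ the identities of the nearest center $\nn(p)$ and the second-nearest center $\nn_2(p) \in S$, together with the value $\dist_N(p)$. The cost change induced by a candidate swap $(y, x)$ with $y \in S$ and $x \in X$ splits cleanly as
\[
    \sum_{p \in X_y}\bigl[\min\bigl(\dist_N(p,x),\ \dist_N(p,\nn_2(p))\bigr) - \dist_N(p,y)\bigr]
    \ +\ \sum_{p \in X \setminus X_y}\min\set{0,\ \dist_N(p,x) - \dist_N(p)},
\]
where the first sum accounts for the reassignment of points that lose their center $y$, and the second for points in other clusters that are attracted to $x$. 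For each fixed $y$ I would build multi-level range trees keyed by the $O(1)$ polyhedral directions and by the per-point thresholds $\dist_N(p,\nn_2(p))$, so that as $x$ ranges over all $n$ candidates the two sums can be evaluated in aggregate time $\tilde O(nk^{2d-3})$. Under $\dist_N$ the sublevel sets $\set{x : \dist_N(p,x) \le r}$ are translates of a fixed polytope with $O(1)$ facets, and the partition of $\reals^d$ into cells on which the total swap cost is linear in $x$ corresponds to the order-$k$ arrangement of bisectors of $S$ under $\dist_N$, of combinatorial complexity $O(k^{2d-2})$. Summing over the $k$ choices of $y$ gives $\tilde O(nk^{2d-2})$ per iteration; multiplying by the $O(k\log(n\Delta))$ iteration bound yields the claimed $O(nk^{2d-1}\polylog n)$ total time.

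The main obstacle I anticipate is handling the first sum: after $y$ is swapped out, each $p \in X_y$ reroutes to whichever of $\nn_2(p)$ or the newcomer $x$ is closer, and the choice depends on whether $x$ lies inside the $\dist_N$-ball of radius $\dist_N(p,\nn_2(p))$ around $p$. Encoding this as a batched orthogonal range-tree query over the $n$ candidates for $x$, rather than evaluating each of the $\Theta(nk)$ candidate swaps in isolation, is exactly where the polyhedral-metric reduction pays off: the bisectors and sublevel sets of $\dist_N$ are piecewise-linear of bounded complexity in each of $O(1)$ directions, matching the structure that range-tree machinery exploits, and the secondary-threshold $\dist_N(p,\nn_2(p))$ can be attached as an extra key along one level of the tree.
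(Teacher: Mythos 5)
Your high-level reduction matches the paper: you pass to a polyhedral metric $\dist_N$ via Lemma~\ref{L:change-metric} so that stability is preserved above $5$, the local-search termination bound of Section~\ref{S:local-search} carries over, and the per-swap evaluation reduces to bounded-complexity piecewise-linear geometry amenable to range trees. From there you diverge. The paper evaluates each of the $O(nk)$ candidate swaps $S' = S + x - y$ \emph{from scratch}: it constructs the $\dist_N$-Voronoi diagram $\VD(S')$, refines it by the cones of $\mathcal{C}$, takes the vertical decomposition $V^\parallel$ of the minimization diagram (which has $O(k^{2d-2})$ cells each bounded by $O(d)$ facets with normals from a fixed finite set $U$), and then computes $\cost(X,S') = \sum_\tau \bigl(\sum_{p\in X\cap\tau}\seq{p,u(\tau)} - |X\cap\tau|\seq{c,u(\tau)}\bigr)$ by one multi-level range-tree query per cell. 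You instead decompose the \emph{change} in cost after a swap into contributions from $X_y$ (rerouted to $\min(\dist_N(p,x),\dist_N(p,\nn_2(p)))$) and from $X\setminus X_y$ (possibly attracted to $x$), and try to batch all $n$ candidates $x$ against a fixed $y$.

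There is a genuine gap in the batching step. Your stated aggregate bound of $\tilde O(nk^{2d-3})$ per fixed $y$, and the claim that the cells on which the swap cost is linear in $x$ correspond to an ``order-$k$ arrangement of bisectors of $S$ of complexity $O(k^{2d-2})$,'' do not hold up. The second sum $\sum_{p\in X\setminus X_y}\min\{0,\dist_N(p,x)-\dist_N(p)\}$ involves, for each of the $n$ data points $p$, a polytope $B(p,\dist_N(p))$ (a scaled translate of $Q$ with a \emph{different} radius for each $p$) together with the cone fan $p+\mathcal{C}$; the partition of $x$-space into pieces on which this sum is a fixed linear function therefore has complexity polynomial in $n$, not in $k$, and is not governed by the Voronoi structure of the $k$ centers alone. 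Likewise the first sum's threshold $\dist_N(p,\nn_2(p))$ must be maintained under swaps, which you posit but do not cost. The paper sidesteps all of this by simply recomputing $\cost(X,S')$ per candidate via $V^\parallel(S')$: the $k$-dependence of its $O(k^{2d-2})$-cell decomposition comes from the lower envelope of the $k$ cone functions $f_c$, a well-understood object, and each cell is an intersection of $O(d)$ halfspaces with normals in a set $U$ fixed in advance, which is exactly what makes the precomputed multi-level range trees $\Psi_{\bar u}$ usable. To make your batched variant rigorous you would need to identify the precise arrangement over $x\in X$ on which the swap-cost is piecewise linear and bound its complexity, and you would also need an amortized scheme for maintaining $\nn_2(\cdot)$ across accepted swaps; as written, neither of these is supplied, so the per-iteration claim is unsupported.
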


For simplicity, we present a slightly weaker result for $d = 2$ using the $L_1$-metric, as it is straightforward to implement and more intuitive.
Using the $L_1$-metric requires $\alpha > 5\sqrt{2}$.
The extension to higher dimensional Euclidean space using the polyhedral works for $\alpha > 5$.

\begin{figure}[t]
\centering
\includegraphics[width=0.3\textwidth, trim={0, 0.5in, 0, 0.7in}, clip]{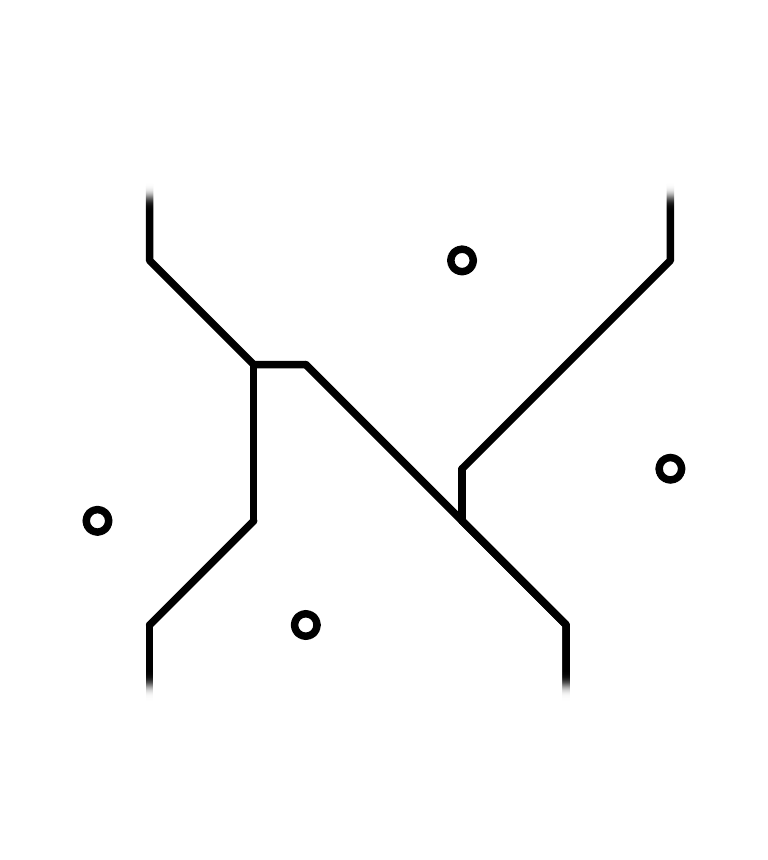}
\includegraphics[width=0.3\textwidth, trim={0, 0.5in, 0, 0.7in}, clip]{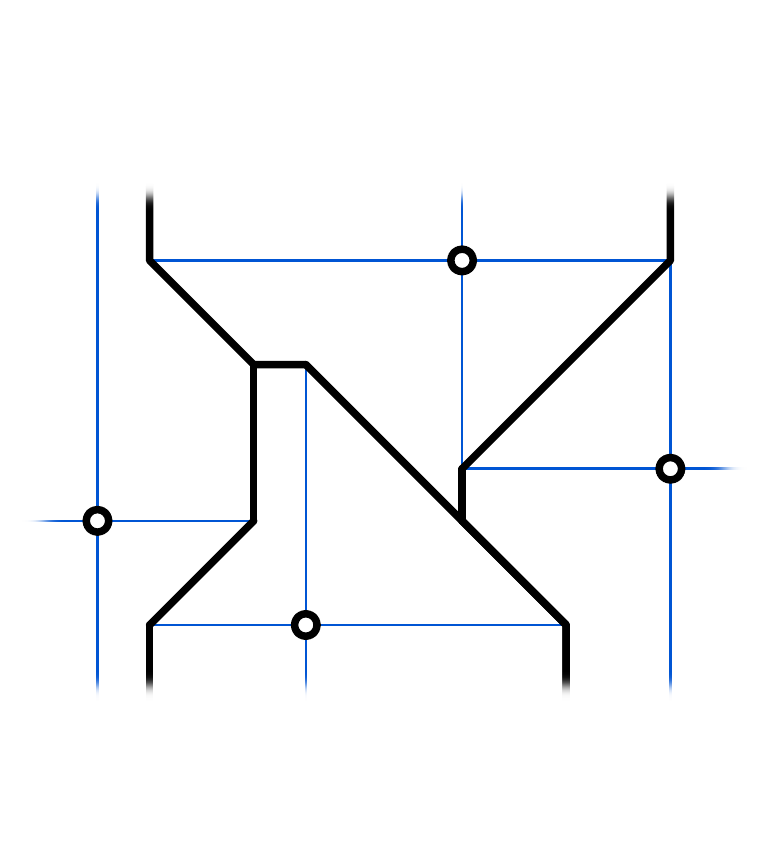}
\includegraphics[width=0.3\textwidth, trim={0, 0.5in, 0, 0.7in}, clip]{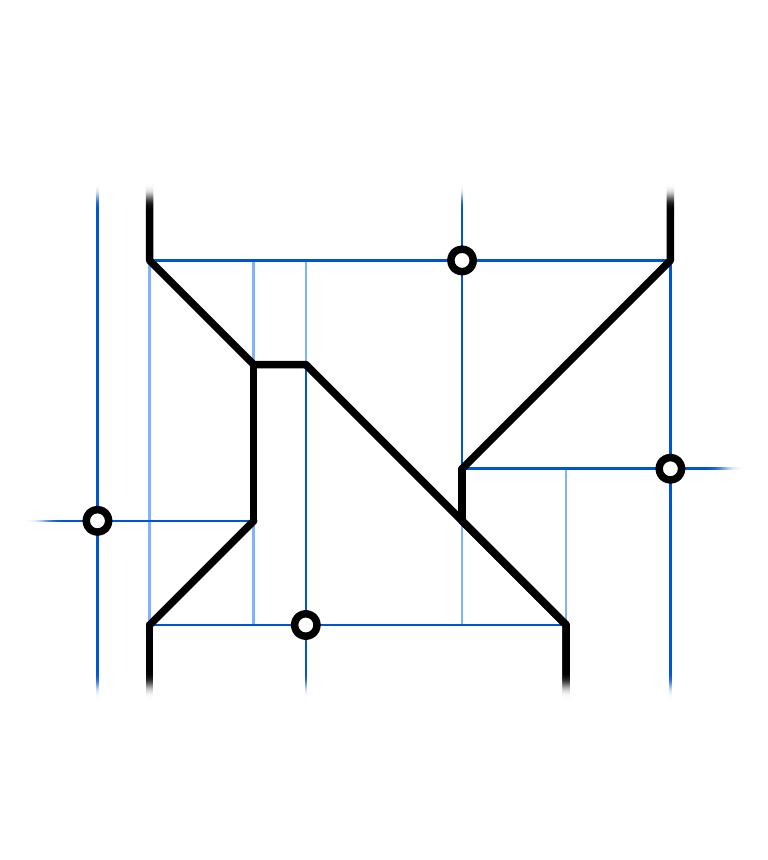}
\caption{$L_1$ Voronoi diagram $V$, quadrant decomposition $\tilde{V}$, and trapezoid decomposition $V^{\parallel}$.}
\label{F:l1-trapezoids}
\end{figure}

\subparagraph*{Voronoi diagram under {$\mathbf{L_1}$} norm.}
First, we fix a point $x \in X \setminus S$ to insert and a center $y \in S$ to drop.
Define $S' \coloneqq S + x - y$.
We build the $L_1$ Voronoi diagram ${V}$ of $S'$.
The cells of ${V}$ may not be convex, but they are \emph{star-shaped}: for any $c \in S'$ and for any point $x \in \Vor(c)$, the segment $c x$  lies completely in $\Vor(c)$.
Furthermore, all line segments on the cell boundaries of $V$ must have slopes belonging to one of the four possible values: vertical, horizontal, diagonal, or antidiagonal.

Next, decompose each Voronoi cell $\Vor(c)$ into four \emph{quadrants} centered at $c$.
Denote the resulting subdivision of $V$ as $\tilde{V}$.
We compute a \emph{trapezoidal decomposition} $V^{\parallel}$ of the diagram $\tilde{V}$ by drawing a vertical segment from each vertex of $\tilde{V}$ in both directions until it meets an edge of $V$; $V^{\parallel}$ has $O(k)$ trapezoids, see Figure~\ref{F:l1-trapezoids}.
For each trapezoid $\tau \in V^{\parallel}$, let $X_\tau \coloneqq X \cap \tau$.
The cost of the new clustering $S'$ can be computed as $\cost(X, S') = \sum_{\tau \in V^{\parallel}} \cost(X_\tau, S')$.



\subparagraph*{Range-sum queries.}
Now we discuss how to compute $\cost(X_\tau, S')$.
Each trapezoid $\tau$ in cells $\Vor(c)$ is associated with a vector $u(\tau) \in \set{\pm 1}^2$, depending on which of the four quadrants $\tau$ belongs to with respect to the axis-parallel segments drawn passing through the center $c$ of the cell.
If $\tau$ lies in the top-right quadrant then $u(\tau) = (1, 1)$.
Similarly if $\tau$ lies in the top-left (resp.\ bottom-left, bottom-right) then $u(\tau) = (-1, 1)$ (resp.\ $(-1, -1)$, $(1, -1)$).

\begin{equation}
    \cost(X_\tau, S') = \sum_{x \in X_\tau} \norm{x - c}_1 = \sum_{x \in X_\tau} \seq{x - c, u(\tau)} = \sum_{x \in X_\tau} \seq{x , u(\tau)} - { |X_\tau| \cdot \seq{c, u(\tau)} }.
\end{equation}

We preprocess $X$ into a data structure
that answers the following query:
\begin{itemize}
\item \textsc{TrapezoidSum}$(\tau, u)$: Given a trapezoid $\tau$ and a vector $u \in \set{\pm 1}^2$, return $|X \cap \tau|$ as well as $\sum_{x \in X \cap \tau} \seq{x , u}$.
\end{itemize}

The above query can be viewed as a $3$-oriented polygonal range query~\cite{de1997computational}.
We construct a $3$-level range tree $\Psi$ on~$X$.
Omitting the details (which can be found in~\cite{de1997computational}), $\Psi$ can be constructed in $O(n \log^2 n)$ time and uses $O(n \log^2 n)$ space.
Each node $\xi$ at the third level of $\Psi$ is associated with a subset $X_\xi \subseteq X$.
We store $w(\xi, u) \coloneqq \sum_{x \in X_\xi} \seq{x, u}$ for each $u \in \set{\pm 1}^2$ and $|X_\xi|$ at $\xi$.
For a trapezoid $\tau$, the query procedure identifies in $O(\log^3 n)$ time a set $\Xi_\tau$ of $O(\log^3 n)$ third-level nodes such that $X \cap \tau = \cup_{\xi \in \Xi_\tau} X_\xi$ and each point of $X\cap \tau$ appears as exactly one node of $\Xi_\tau$.
Then $\sum_{x \in X_\tau} \seq{x, u} = \sum_{\xi \in \Xi_\tau} w(\xi, u)$ and  $|X_\tau| = \sum_{\xi \in \Xi_\tau} |X_\xi|$.

With the information stored at the nodes in $\Xi_\tau$, \textsc{TrapezoidSum}$(\tau, u)$ query can be answered in $O(\log^3 n)$ time.
By performing \textsc{TrapezoidSum}$(\tau, u(\tau))$ query for all $\tau \in V^{\parallel}$, $\cost(X_\tau, S')$ can be computed in $O(k \log^3 n)$ time since $V^{\parallel}$ has a total of $O(k)$ trapezoids.

\begin{figure}[t]
\centering\small
\begin{algorithm}
\textul{$\textsc{$1$-Swap}(X, S)$:}\+
\\	\emph{input:} Point set $X$ and centers $S$
\\  for each point $x \in X \setminus S$ and center $y \in S$:\+
\\    $S' \gets S + x - y$
\\    ${V} \gets$ $L_1$ Voronoi diagram of $S'$
\\    ${\tilde{V}} \gets$ decompose each cell $\Vor(c)$ into four quadrants centered at $c$
\\    ${V^{\parallel}} \gets$ trapezoidal decomposition of $\tilde{V}$
\\    for each trapezoid $\tau \in V^{\parallel}$:\+
\\      $\cost(X_\tau, S') \gets \mathsc{TrapezoidSum}(\tau, u(\tau))$\-
\\    $\cost(X, S') \gets \sum_{\tau \in {V^{\parallel}}} \cost(X_\tau, S')$\-
\\  return $(x, y)$ with the lowest $\cost(X, S + x - y)$
\end{algorithm}
\caption{Efficient implementation of $1$-swap under $1$-norm.}
\label{F:1-swap}
\end{figure}

We summarize the implementation of $1$-swap algorithm in Figure~\ref{F:1-swap}.
The $1$-swap procedure considers at most $nk$ different $k$-clusterings.
Therefore we obtain the following.

\begin{lemma} \label{lem:cost-runtime}
Let $(X, \dist, \cost)$ be a given clustering instance where $\dist$ is the $L_1$ metric, and let $S$ be a given $k$-clustering.
After $O(n \log n)$ time preprocessing,
we find a $k$-clustering $S' \coloneqq S + x - y$ minimizing $\cost(X, S')$ among all choices of $(x, y)$ in $O(nk^2 \log^3 n)$ time.
\end{lemma}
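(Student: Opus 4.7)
The plan is to execute the algorithm laid out in the preceding paragraphs and account for its running time carefully. In the preprocessing phase I would construct, in $O(n \polylog n)$ time, a three-level orthogonal range tree $\Psi$ on $X$ that supports the query $\textsc{TrapezoidSum}(\tau, u)$ returning both $|X \cap \tau|$ and $\sum_{x \in X \cap \tau} \seq{x, u}$ for an axis-aligned trapezoid $\tau$ and a sign vector $u \in \set{\pm 1}^2$. At each third-level canonical node $\xi$ I would store $|X_\xi|$ together with the four scalars $w(\xi, u) = \sum_{x \in X_\xi} \seq{x, u}$, one per sign pattern; this augmentation is standard and is subsumed by the construction time.

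Given $\Psi$, for each of the at most $nk$ candidate swaps $(x, y) \in (X \setminus S) \times S$, I would form $S' = S - y + x$, build its $L_1$ Voronoi diagram $V$ in $O(k \log k)$ time, refine it into the quadrant subdivision $\tilde V$ by splitting each star-shaped cell $\Vor(c)$ along the two axis-parallel lines through $c$, and then produce the trapezoidal decomposition $V^{\parallel}$ by dropping a vertical chord from every vertex of $\tilde V$. The structural claim I need to verify is that $V^{\parallel}$ has $O(k)$ trapezoids and that every $\tau \in V^{\parallel}$ lies inside a single quadrant of a single cell $\Vor(c)$, so that $\norm{p - c}_1 = \seq{p - c, u(\tau)}$ for all $p \in \tau$, where $u(\tau) \in \set{\pm 1}^2$ is the sign vector of that quadrant. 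Once this is in hand,
\[
\cost(X_\tau, S') = \sum_{p \in X_\tau} \seq{p, u(\tau)} \;-\; |X_\tau| \cdot \seq{c, u(\tau)}
\]
is returned by a single call $\textsc{TrapezoidSum}(\tau, u(\tau))$ in $O(\log^3 n)$ time; summing over $V^{\parallel}$ yields $\cost(X, S')$ in $O(k \log^3 n)$ time per swap. Returning the swap of minimum cost and multiplying by $O(nk)$ candidates gives the $O(nk^2 \log^3 n)$ bound claimed in the lemma.

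The main obstacle is establishing the structural claim. It reduces to two facts about $L_1$ Voronoi diagrams in the plane: (i) every edge of $V$ has one of only four possible slopes---vertical, horizontal, diagonal, or antidiagonal---so cutting $\Vor(c)$ along the two axis-parallel lines through $c$ yields convex pieces, and (ii) $V$ has linear combinatorial complexity, so after adding the $O(k)$ quadrant splitters and the vertical chords from the $O(k)$ resulting vertices, the total number of axis-aligned trapezoids in $V^{\parallel}$ remains $O(k)$. Within each such quadrant the identity $\norm{p - c}_1 = \seq{p - c, u(\tau)}$ is immediate from the definition of the sign vector, and so the per-trapezoid cost computation is correct. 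All remaining ingredients---the $O(k \log k)$ diagram construction, the standard three-level range tree, and the per-trapezoid query---plug in mechanically.
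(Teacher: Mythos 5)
Your proposal mirrors the paper's argument step for step: same $L_1$ Voronoi diagram, same quadrant refinement into $\tilde V$, same trapezoidal decomposition $V^\parallel$ of linear size, and the same three-level range tree supporting $\textsc{TrapezoidSum}$ in $O(\log^3 n)$ per query, giving $O(k\log^3 n)$ per swap and $O(nk^2\log^3 n)$ overall. The structural facts you flag---four slopes for $L_1$ bisector edges and linear Voronoi complexity---are exactly the ones the paper invokes, so this is the same proof.
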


\subsection{Cost of $\mathbf{1}$-swap in higher dimensions}
\label{S:appendix-cost-highd}
The \textsc{$1$-swap} algorithm can be extended to higher dimensions using the theory of geometric arrangements~\cite{sa-dssga-1995,as-aa-2000,aps-sugo-2008}.
The details are rather technical, so we only sketch the proofs here.
As in Section~\ref{SS:1-clustering}, instead of working with the $L_1$ metric, we work with a polyhedral metric.
Let the centrally-symmetric set $N \subseteq S^{d - 1}$ and the convex polyhedron $Q$ be defined as in Section~\ref{SS:1-clustering}.
The set $N$ partitions $\reals^d$ into a set of $O(1)$ polyhedral cones denoted by $\mathcal{C} \coloneqq \Set{ C_1, \dots, C_\gamma }$, each with $0$ as its apex so that all points (when viewed as vectors) in a cone have the same vector $u$ of $N$ as the nearest neighbor under the cosine distance, i.e.\ the polyhedral distance $\dist_N(0, u)$ is realized by $u$.
The total complexity of $\mathcal{C}$ is $O(\gamma) = O(1)$.

We show that $X$ can be preprocessed into a data structure so that $\cost(X, S)$, the cost of the $k$-clustering induced by any $k$-point subset $S$ of $X$ under $\dist_N$ can be computed in $O(k^{2d} \polylog(n))$ time.

\def\VD{\mathrm{VD}}

Let $S \subset X$ be a set of $k$ points.
We compute the Voronoi diagram $\VD(S)$ of $S$ under the distance function $\dist_N$.
More precisely, for a point $c \in S$, let $f_c \colon \reals^d \rightarrow \reals_{\geq 0}$ be the function $f_c(x) \coloneqq \dist_N(c, x)$.
The graph of $f_c$ is a polyhedral cone in $\reals^{d+1}$ whose level set for the value $\lambda$ is the \emph{homothet} copy of $Q$, $c + \lambda Q$.
Voronoi diagram $\VD(S)$ is the minimization diagram of function $f_c$ over every point $c$ in $S$; that is, the projection of the lower envelope $f(x) \coloneqq \min_{c} f_c(x)$ onto the hyperplane $X_{d+1} = 0$ (identified with $\reals^d$).
We further decompose each Voronoi cell $\Vor(c)$ of $\VD(S)$ by drawing the family of cones in $\mathcal{C}$ from $c$; put it differently, by drawing the cone $c + C_j$ for $1 \leq j \leq k$, within the cell $\Vor(c)$.
Each cell $\tau$ in the refined subdivision of $\Vor(c)$ has the property that for all points $x \in \tau$, $\dist_N(x, c)$ is realized by the vector of $N$---by $u_j$ if $x \in c + C_j$.
Let $\tilde{V}$ denote the resulting refinement of $\VD(S)$.

Finally, we compute the vertical decomposition of each cell in $\tilde{V}$, which is the extension of the trapezoidal decomposition to higher dimensions; see \cite{sa-dssga-1995,k-atubv-2004} for details.
Let $V^\parallel$ denote the resulting convex subdivision of $\reals^d$.
It is known that $V^\parallel$ has $O(k^{2d - 2})$ cells, that it can be computed in $O(k^{2d - 2})$ time, and that each cell of $V^\parallel$ is convex and bounded by at most $2d$ facets, namely it is the intersection of at most $2d$ halfspaces.
%
Using the same structure of the distance function $\dist_N$, we can show that there is a set {$U$} of $O(\gamma^d) = O(1)$ unit vectors such that each facet of a cell in $V^\parallel$ is normal to a vector in $U$, and that $U$ depends only on $N$ and not on $S$.

With these observations at hand, we preprocess $X$ into a data structure as follows: we fix a $2d$-tuple $\bar{u} \coloneqq (u_1, \dots, u_{2d}) \in U^{2d}$.
Let $R_{\bar{u}}$ be the set of all convex polyhedra formed by the intersection of at most $2d$ halfspaces each of which is normal to a vector in $\bar{u}$.
Using a multi-level range tree (consisting of $2d$ levels), we preprocess $X$ in $O(n \log^{2d} n)$ time into a data structure $\Psi_{\bar{u}}$ of size $O(n\log^{2d-1} n)$ for each $\bar{u}$, so that for a query cell $\tau \in R_{\bar{u}}$ and for a vector $u \in N$, we can quickly compute the total weight
$w(\tau, u) = \sum_{p \in X \cap \tau} \seq{p, u}$ in $O(\log^{2d} n)$ time.

For a given $S$, we compute the cost $\cost(X, S)$ as follows.
We first compute $\VD(S)$ and $V^\parallel(S)$.
For each cell $\tau \in V^{\parallel}(S)$ lying in $\Vor(c)$, let $u(\tau) \in N$ be the vector $u_j$ such that $\tau \subseteq c + C_j$. As in the 2d case,

\begin{align*}
\cost(X, S) = \sum_{c} \sum_{p \in \Vor(c)} \dist_N(p, c)
&= \sum_{c} \sum_{\tau \in V^\parallel(S) \cap \Vor(c)} \,\, \sum_{p \in X \cap \tau} \seq{p - c, u(\tau)} \\
&= \sum_{c} \sum_{\tau \in V^\parallel(S) \cap \Vor(c)} \Paren{ \sum_{p \in X \cap \tau} \seq{p, u(\tau)} - |X \cap \tau| \cdot \seq{c, u(\tau)} }.
\end{align*}

Fix a cell $\tau \in V^\parallel(S) \cap \Vor(c)$.
Suppose $\tau \in R_{\bar{u}}$.
Then by querying the data structure $\Psi_{\bar{u}}$ with $\tau$ and $u(\tau)$, we can compute $w(\tau, u) = \sum_{p \in X \cap \tau} \seq{p, u(\tau)}$ in $O(\log^d n)$ time.
Repeating this procedure over all cells of $V^\parallel(S)$, $\cost(X, S)$ can be computed in $O(k^{2d -1} \log^{2d} n)$ time, after an initial preprocessing of $O(n \log^{2d} n)$ time.

\section{Coresets and an Alternative Linear Time Algorithm}
\label{S:coresets}

In this section we provide an alternative way to compute the optimal $k$-clustering, where the objective can be any of $k$-center, $k$-means, or $k$-median.
Here we are aiming for a running time linear in $n$, but potentially with exponential dependence on $k$.
With such a goal we can further relax the stability requirement using the idea of \emph{coresets}.
When there is strict separation between clusters (when $\alpha \ge 2 + \sqrt{3}$), we can recover the optimal clustering. We note that this provides a significant improvement to the stability parameter needed for $k$-median over the local search approach, albeit with worse running time dependence on $k$.

\subparagraph*{Coresets.} Let $(X,\dist)$ be a clustering instance.
The \EMPH{radius} of a cluster $X_i$ is the maximum distance between its center and any point in $X_i$.
Let $S$ be a given $k$-clustering of $(X,\dist)$, with clusters $X_1,\dots,X_k$, centers $c_1, \dots, c_k$, and radius $r_1,\dots,r_k$, respectively.
Let $O$ be the optimal $k$-clustering of $(X,\dist)$, with clusters $X^*_1,\dots,X^*_k$, centers $c^*_1, \dots, c^*_k$ and radius $r^*_1,\dots,r^*_k$, respectively.
Let $B(c, r)$ denote the ball centered at $c$ with radius $r$ under $\dist$.

%
A point set $Q \subseteq X$ is a \EMPH{multiplicative $\e$-coreset} of $X$ if every $k$-clustering $S$ of $Q$ satisfies
\[
X \subseteq \bigcup_i B\Paren{ c_i, (1+\e)\cdot r_i }.
\]

\begin{lemma}
\label{lem:optimal-centers}
Let $(X,\dist)$ be a $(1+\e)$-stable clustering instance with optimal $k$-clustering $O$.
A multiplicative $\e$-coreset of $X$ contains at least one point from each cluster of $O$.
\end{lemma}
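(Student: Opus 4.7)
My plan is to argue by contraposition: assuming $Q \cap X^*_j = \emptyset$ for some cluster $X^*_j$ of $O$, I will exhibit a specific $k$-clustering $S$ of $Q$ for which $X \not\subseteq \bigcup_i B(c_i,(1+\e)\cdot r_i)$, thereby violating the definition of a multiplicative $\e$-coreset.

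The easy case $|Q|=k$ is essentially trivial: the only $k$-clustering of $Q$ is the one that makes each point its own cluster, all radii are zero, and the expanded balls collapse back to the points of $Q$. The coreset condition then forces $X \subseteq Q$, contradicting the fact that $X^*_j$ is non-empty and disjoint from $Q$.

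For $|Q|>k$, my plan is to lift the restriction $O|_Q$---which by hypothesis has at most $k-1$ non-empty parts $Q_i \coloneqq Q \cap X^*_i$, each contained in the corresponding optimal cluster $X^*_i$ with $i \neq j$---to a valid $k$-clustering of $Q$ by splitting one of these parts in two. All of the resulting centers then live in $\bigcup_{i \neq j} X^*_i$, i.e.\ outside $X^*_j$, and each cluster radius $r_i$ is bounded above by the corresponding $r^*_i$, hence by $r^*_{\max}$. I would then apply the $(1+\e)$-center-proximity property (which follows from $(1+\e)$-stability by the result of Awasthi, Blum, and Sheffet) to a witness point $p \in X^*_j$ realizing $\dist(p,c^*_j) = r^*_j$: center proximity yields $\dist(p, c^*_i) > (1+\e)\cdot r^*_j$ for every $i \neq j$, and combining this with the radius bound should force $p$ to fall outside every expanded ball $B(c_i,(1+\e)\cdot r_i)$.

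The main obstacle is that this direct bookkeeping becomes strict only when $r^*_j$ is comparable to the other $r^*_i$; for a small missing cluster $X^*_j$, weak center proximity alone provides only an $\e$-scale separation which need not beat $(1+\e)\cdot r^*_{\max}$. I expect to resolve this by strengthening the argument into a perturbation: define a distance function $\tilde\dist$ with $\dist(p,q) \le \tilde\dist(p,q) \le (1+\e)\cdot \dist(p,q)$ that preserves intra-$O$-cluster distances and inflates inter-$O$-cluster distances maximally, then compare the cost of $O$ on $X$ with the cost of the clustering induced by the centers of $S$ on $X$ under $\tilde\dist$. Since every center of $S$ lies outside $X^*_j$ while a witness point $p \in X^*_j$ is forced to be within $(1+\e)\cdot r_i$ of one of those centers, the induced clustering should be an equally good optimum under $\tilde\dist$, contradicting the uniqueness of the optimal clustering guaranteed by Lemma~\ref{L:unique-optimal}.
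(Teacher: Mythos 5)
Your setup matches the paper's---restrict $O$ to $Q$ to get a $k$-clustering $S_Q$ of $Q$, then invoke $(1+\e)$-center proximity---but your choice of witness and the point at which you apply center proximity both differ, and that difference is exactly the source of the obstacle you flagged. The paper's witness is the optimal center $c^*_j$ of the missing cluster $X^*_j$ itself, not a farthest point of $X^*_j$. The coreset condition then forces $c^*_j$ into some expanded ball $B(c_i,(1+\e)r_i)$, where $X_i=Q\cap X^*_i$ is a non-empty part of $S_Q$ with $i\ne j$ and $c_i\in Q$, hence $c_i\in X^*_i$. Center proximity is applied \emph{to $c_i$}, the covering part's center, rather than to the witness: $\dist(c_i,c^*_j)>(1+\e)\dist(c_i,c^*_i)$, which together with $r_i\le r^*_i$ (from $X_i\subseteq X^*_i$) is played off against the coverage inequality $\dist(c_i,c^*_j)\le(1+\e)r_i$ to reach a contradiction. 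In your version, applying proximity to the farthest point $p\in X^*_j$ gives a lower bound on $\dist(p,\cdot)$ proportional to $r^*_j$, the radius of the \emph{missing} cluster, while you must beat $(1+\e)r_i$, which scales with the \emph{covering} cluster. When $X^*_j$ is small the bound is hopeless, exactly as you noticed; swapping both the witness and the side on which proximity is applied removes the dependence on $r^*_j$ altogether.

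Your perturbation fallback does not look like it can close this gap. Under the $\tilde\dist$ you describe, a point $p\in X^*_j$ reassigned by the coreset-induced clustering to a center outside $X^*_j$ has its distance scaled up by $(1+\e)$, while in $O$ its cost is unchanged. Coverage only supplies $\dist(p,c_i)\le(1+\e)r_i$, an upper bound involving $r_i$ that says nothing about how this compares to $\dist(p,c^*_j)$; if anything, the inflation pushes the cost of the coreset-induced clustering \emph{up} relative to $O$, which is the wrong direction to contradict uniqueness. To make a perturbation argument bite here you would still need the same pointwise geometric bound that the direct coverage-plus-proximity argument already gives, so the detour does not help.
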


\begin{proof}
\label{proof:kcenter}

Let $Q$ be a multiplicative $\e$-coreset of $X$.
We start by defining a $k$-clustering $S_Q$ of $Q$.
For each point $q$ in $Q$, assign $q$ to its cluster in the optimal clustering $O$.
This results in some clustering with at most $k$ clusters.
Insert additional empty clusters to create a valid $k$-clustering $S_Q$ of $Q$.

Assume that $Q$ does not contain any points from some optimal cluster $X_i^*$ of $O$.
Consider the center point $c_i^*$ of $X_i^*$.
By the fact that $Q$ is a multiplicative $\e$-coreset, $c_i^*$ must be contained in a ball resulting from the expansion of each cluster of $S_Q$ by an $\e$-fraction of its radius.
In notation, let the cluster of $S_Q$ whose expansion covers $c_i^*$ be $X_j$, with center $c_j$ and radius $r_j$.
Then one has $\dist(c_j, c_i^*) \leq (1+\e) \cdot r_j$.

Because $S_Q$ is constructed by restricting the optimal clustering $O$ on $Q$,
cluster $X_j$ is a subset of some optimal cluster in $O$: $X_j \subseteq X_j^*$.
This implies $r_j \leq r_j^*$.
Additionally, $c_j$ and $c_i^*$ lie in different optimal clusters, as $c_j$ is in $Q$ and therefore does not lie in $X_i^*$.
So by $(1+\e)$-center proximity:
\[
\dist(c_j, c_i^*) > (1+\e)\cdot\dist(c_j, c_j^*) = (1+\e)\cdot r_j^* \geq (1+\e)\cdot r_j,
\]
contradicting to $\dist(c_j, c_i^*) \leq (1+\e) \cdot r_j$.
Therefore, $Q$ must contain at least one point from each optimal cluster.
\end{proof}


\subparagraph*{Algorithm.}
We first compute a constant approximation to the clustering problem instance $(X,\dist)$.
We then recursively construct a multiplicative coreset of size $O(k!/\e^{dk})$~\cite{har2004coresets,agarwal2005geometric}.
By taking $\e = 1$, the coreset has size $O(k!)$ and for $2$-stable instances, the multiplicative $2$-coreset $Q$ contains at least one point from every optimal cluster by Lemma~\ref{lem:optimal-centers}.
After obtaining this coreset, we then need to reconstruct the optimal clustering.
By~\cite[Corollary~9]{br-dsccl-2014}, when our instance satisfies strict separation ($\alpha \geq 2 + \sqrt{3}$), taking any point from each optimal cluster induces the optimal partitioning of $X$.
To find $k$ such points, each from a different optimal cluster, we try all possible $k$ subsets of $Q$ as the candidate $k$ centers.
For each set of centers we compute its cost (under polygonal metric) using the cost computation scheme of Section~\ref{S:efficient-cost}, then take the clustering with minimum cost.
Finally, recompute the optimal centers using any $1$-clustering algorithm on each cluster of $O$.

It is known constant approximation to any of the $k$-means, $k$-median, or $k$-center instance can be computed in $O(nk)$ time~\cite{gonzalez1985clustering} and even in $\tilde{O}(n)$ time~\cite{har2004coresets,cfs-ntasc-2019} in constant-dimensional Euclidean spaces.
Thus computing the multiplicative $2$-coreset takes $O(nk^2 + k!)$ time~\cite{har2004no}.
Using the cost computation scheme from Section~\ref{S:efficient-cost}, after $O(n \log^{2d} n)$ preprocessing time, the cost of each clustering can be computed in $\tilde{O}(k^{2d-1})$ time.
There are at most $O((k!)^k)$ possible choices of center set of size $k$.

We conclude the section with the following theorem.

\begin{theorem}
    \label{theorem:coreset-main}
    Let $X$ be a set with $n$ points lying in $\reals^d$ and $k \geq 1$ an integer.
    If the $k$-means, $k$-median, or $k$-center instance for $X$ under the Euclidean distance is $\alpha$-stable for $\alpha \geq 2 + \sqrt{3}$ then the optimal clustering can be computed in $\tilde{O}(nk^2 + k^{2d-1} \cdot (k!)^k)$ time.
\end{theorem}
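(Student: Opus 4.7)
The plan is to execute precisely the five-stage algorithm sketched immediately before the statement and then bound the cost of each stage. First, I would invoke the near-linear-time constant-factor approximation of Har-Peled--Mazumdar (or Cohen-Addad et al.) to obtain an initial solution in $\tilde{O}(n)$ time. This seed is fed into the multiplicative coreset construction of~\cite{har2004coresets,agarwal2005geometric} with parameter $\eps = 1$, producing a multiplicative $1$-coreset $Q \subseteq X$ of size $O(k!)$ in $\tilde{O}(nk^2 + k!)$ time. Since $\alpha \geq 2 + \sqrt{3} > 2$ the instance is in particular $2$-stable, so Lemma~\ref{lem:optimal-centers} guarantees that $Q$ contains at least one representative $q_i \in X_i^*$ from every optimal cluster.

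Next I would explain why it suffices to enumerate subsets of $Q$. Because $\alpha \geq 2 + \sqrt{3}$, the strict intra-inter separation of Lemma~\ref{L:intra-inter} implies, via the nearest-neighbor rule, that picking any one point $q_i \in X_i^*$ from each optimal cluster and using $\{q_1,\dots,q_k\}$ as the candidate center set recovers exactly the optimal partition $X_1^*,\dots,X_k^*$ of $X$ under $\dist$. In particular there exists a size-$k$ subset $S \subseteq Q$ whose induced clustering is the optimal partition. Therefore it is enough to enumerate all $\binom{|Q|}{k} = O((k!)^k)$ size-$k$ subsets of $Q$, evaluate each, and keep the cheapest one.

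To evaluate candidates efficiently, I would first apply Lemma~\ref{L:change-metric} to replace the Euclidean metric by a polyhedral metric $\dist_N$ whose approximation ratio $\beta = 1/(1-\eps)$ satisfies $\alpha/\beta > 2$; this preserves both the optimal clustering and sufficient stability. The range-tree cost-evaluation structure of Section~\ref{S:appendix-cost-highd}, built once in $O(n\log^{2d} n)$ preprocessing time, evaluates $\cost(X,S')$ for an arbitrary $k$-subset $S'$ in $\tilde{O}(k^{2d-1})$ time---indeed, that analysis does not rely on $S'$ being obtained by a $1$-swap from a previous set, only on $|S'|=k$. Iterating over all $O((k!)^k)$ candidates and picking the minimum-cost one yields the optimal partition; a final $1$-clustering step on each recovered cluster, using the primitives of Section~\ref{SS:1-clustering}, refines the centers to their true (continuous) optima, which by Lemma~\ref{lem:discreteOpt} coincide with the discrete optimum anyway.

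Summing the stages gives preprocessing $\tilde{O}(n)$, coreset construction $\tilde{O}(nk^2 + k!)$, enumeration $O((k!)^k) \cdot \tilde{O}(k^{2d-1})$, and final refinement $\tilde{O}(n)$, for a total of $\tilde{O}(nk^2 + k^{2d-1}(k!)^k)$, matching the claimed bound. The main conceptual care points, rather than a genuine obstacle, are twofold: ensuring that the stability loss incurred by switching to the polyhedral metric keeps us above the threshold $2+\sqrt{3}$ needed to apply Lemma~\ref{L:intra-inter} (handled by choosing $\eps$ small enough, which only affects $d$-dependent constants), and verifying that the data structure of Section~\ref{S:appendix-cost-highd} truly supports arbitrary-$S$ queries and not only incremental $1$-swap updates---which it does, since that analysis computes $\VD(S)$ and $V^\parallel(S)$ from scratch for each $S$.
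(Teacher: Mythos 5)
Your proposal follows the paper's own proof step for step: constant-factor seed, multiplicative coreset with $\eps=1$, Lemma~\ref{lem:optimal-centers} to place a representative of each optimal cluster inside $Q$, strict intra-inter separation to argue that a one-per-cluster subset reproduces the optimal partition, enumeration over all $O((k!)^k)$ size-$k$ subsets of $Q$ using the polyhedral-metric cost structure of Section~\ref{S:efficient-cost}, and a final $1$-clustering pass to recompute centers. The only cosmetic difference is that you cite Lemma~\ref{L:intra-inter} for the separation fact where the paper invokes the Ben-David--Reyzin corollary; the running-time accounting is identical.
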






\bibliographystyle{abbrv}
\bibliography{stable}



\newpage
\appendix
\section{Appendix}
\label{S:appendix}

\subsection{Stability Properties}
\label{S:stability-properties}

\paragraph*{Properties of $\mathbf{\alpha}$-center proximity.}
Let $(X,\delta)$ be a clustering instance satisfying $\alpha$-center proximity, where $\delta$ is a metric and $\alpha > 1$.
Let $X_1$ be a cluster with center $c_1$ in an optimal clustering.
Let $p,p',p'' \in X_1$ and $q \in X \setminus X_1$ with $c_2$ the center of $q$'s cluster.
Then,
\begin{enumerate}[topsep=0.3em,itemsep=0em,label={(\arabic*)}]
\item
\label{it:iSep}
$(\alpha - 1)\cdot\delta(p,c_1) < \delta(p,q)$;
\item
\label{it:iiSep}
$(\alpha - 1)\cdot\delta(p,c_1) < \delta(c_1,c_2)$;
\item
\label{it:iiiSep}
$(\alpha - 1)\cdot\delta(c_1,c_1) < (\alpha + 1) \cdot \delta(p,q)$;
\item
\label{it:ivSep}
$(\alpha - 1)\cdot\delta(p,p') < \frac{2 \alpha}{\alpha - 1} \cdot \delta(p,q)$;
\hfill
$\delta(p,p') < \delta(p,q)$ for $\alpha \ge 2 + \sqrt{3}$
\item
\label{it:vSep}
$(\alpha - 1)\cdot\delta(p',p'') < \frac{2(\alpha+1)}{\alpha -1} \cdot \delta(p,q)$.
\hfill
$\delta(p',p'') < \delta(p,q)$ for $\alpha \ge 2 + \sqrt{5}$
\end{enumerate}
\medskip

\begin{proof}
\mbox{}

\noindent \ref{it:iSep}
$\delta(p,q) \le (\alpha -1)\cdot \delta(p,c_1)$ yields the following contradiction.
\begin{eqnarray*}
\alpha\cdot \delta(q,c_2) < \delta(q,c_1) \le \delta(p,c_1) + \delta(p,q) \le \alpha\cdot \delta(p,c_1)
&\Rightarrow&
\delta(q,c_2) < \delta(p,c_1)\\
\alpha\cdot \delta(p,c_1) < \delta(p,c_2) \le \delta(q,c_2) + \delta(p,q) \le \delta(q,c_2) + (\alpha -1)\cdot \delta(p,c_1)
&\Rightarrow&
\delta(p,c_1) < \delta(q,c_2)
\end{eqnarray*}

\noindent \ref{it:iiSep}
Follows from $\alpha\cdot \delta(p,c_1) < \delta(p,c_2) \le \delta(p,c_1) + \delta(c_1,c_2)$.
\smallskip

\noindent \ref {it:iiiSep}
Follows by
$
\delta(c_1,c_2) \le \delta(c_1,p) + \delta(p,q) + \delta(q,c_2) \stackrel{\ref{it:iSep}}{<} \left(\frac{2}{\alpha - 1} + 1\right) \cdot \delta(p,q) = \frac{\alpha+1}{\alpha - 1} \cdot \delta(p,q)
$.
\smallskip

\noindent \ref {it:ivSep}
Follows by
\begin{eqnarray*}
(\alpha - 1)\cdot\delta(p,p') &\le& (\alpha - 1)\cdot \delta(p,c_1) + (\alpha - 1)\cdot \delta(p',c_1)
\stackrel{\ref{it:iSep},\ref{it:iiSep}}{<} \delta(p,q) + \delta(c_1,c_2)\\
&\stackrel{\ref{it:iiiSep}}{<}& \delta(p,q) + \frac{\alpha+1}{\alpha -1}\cdot \delta(p,q)
= \frac{2 \alpha}{\alpha - 1} \cdot \delta(p,q).
\end{eqnarray*}

\noindent \ref {it:vSep}
Follows by
\begin{eqnarray*}
(\alpha - 1)\cdot\delta(p',p'') \le (\alpha - 1)\cdot\delta(p',c_1) + (\alpha - 1)\cdot\delta(p'',c_1)
\stackrel{\ref{it:iiSep}}{<}  2\cdot \delta(c_1,c_2)
\stackrel{\ref{it:iiiSep}}{<} \frac{2(\alpha+1)}{\alpha -1} \cdot \delta(p,q).
\end{eqnarray*}
\end{proof}

\begin{proof}[Proof of Lemma~\ref{L:intra-inter}]
Let $c_1$ and $c_2$ be the centers of cluster $X_1$ and $q$'s cluster, respectively.

    (i) First we show that $\dist(c_1,c_2) < \frac{\alpha+1}{\alpha-1} \cdot \dist(p'',q)$:
\begin{align*}
\dist(p'',q)
&\ge \dist(q,c_1) - \dist(p'',c_1) \\
&\ge \dist(c_1,c_2)  - \dist(q,c_2) - \dist(p'',c_1) \\
&> \dist(c_1,c_2)  - \Paren{ \dist(q,p'') + \dist(p'',q) } / (\alpha-1).
\end{align*}
Rearranging the inequality proves the claim.

Now $\alpha \cdot \dist(p,c_1) < \dist(p,c_2) \le \dist(p,c_1) + \dist(c_1,c_2)$,
which implies that $(\alpha - 1) \cdot \dist(p,c_1) < \dist(c_1,c_2)$.
It follows that
\begin{align*}
\dist(p,p')
&\le \dist(p,c_1) + \dist(p',c_1) \\
&< 2 \cdot \dist(c_1,c_2) / (\alpha - 1) \\
&< \frac{2(\alpha+1)}{(\alpha-1)^2} \cdot \dist(p'',q) \\
&\le \dist(p'',q),
\end{align*}
where the last inequality holds when $\alpha \ge 2+\sqrt{5}$.

\smallskip

\noindent (ii) Let us assume $\delta(c_1,c_2) = 1$. We know that $\delta(p,c_2) > \alpha \cdot \delta(p,c_1)$ for all $p \in X_1$. The set of points $p$ with $\delta(p,c_2) = \alpha \cdot \delta(p,c_1)$ is known as Apollonian Circle $A_1$ (with $c_1$ inside, but not centered at $c_1$!), see Fig.~\ref{fi:Apollo}. $X_1$ must be contained inside this circle $A_1$, or sphere in higher dimensions. Similarly, there is a sphere $A_2$ enclosing $q$'s cluster (relative to $X_1$).

We take the classical fact that these are circles as given, but we want to understand the involved parameters. Of course, the circle $A_1$ has to be centered on the line $\ell$ through $c_1$ and $c_2$. Let $a$ and $b$ be the intersections of $A_1$ with $\ell$, with $b$ on the segment $c_1c_2$. $\delta(c_1,b) = \alpha \delta(c_2,b) = \alpha (1 - \delta(c_1,b))$, hence $\delta(c_1,b) = \frac{1}{\alpha +1}$.
Similarly, $\delta(c_1,a) = \alpha \delta(c_2,a) = \alpha (1 + \delta(c_1,a))$, hence $\delta(c_1,a) = \frac{1}{\alpha -1}$.
This sets the diameter of $A_1$ to $\frac{1}{\alpha +1} + \frac{1}{\alpha -1} = \frac{2\alpha}{\alpha^2-1}$, and the distance between $A_1$ and $A_2$ to $1 - 2\cdot \frac{1}{\alpha+1} = \frac{\alpha-1}{\alpha+1}$. It follows that $\delta(p',p'')/\delta(p''',q) <  \frac{2\alpha}{\alpha^2-1} / \frac{\alpha-1}{\alpha+1} = \frac{2 \alpha}{(\alpha - 1)^2}$.

\end{proof}

\begin{figure}[htb]
\centering
\colorbox{white}{\includegraphics[width=0.6\textwidth]{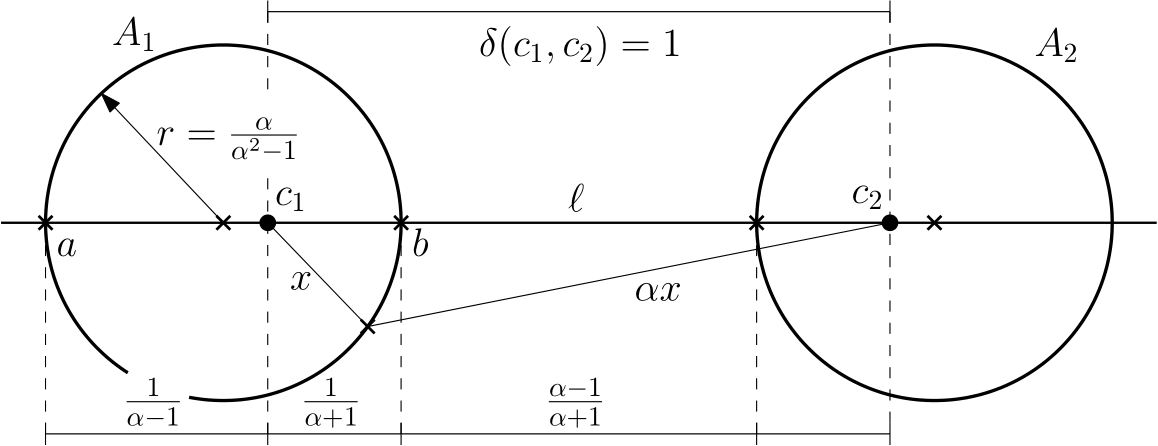}}
\caption{The Apollonian Circles (with parameter $\alpha$) for clusters centered at $c_1$ and $c_2$. }
\label{fi:Apollo}
\end{figure}

\end{document}